\newtheorem{theorem}{Theorem}
\newtheorem{thmrepeat}{Theorem}
\newtheorem{prop}{Proposition}
\newtheorem{corollary}{Corollary}[theorem]
\newtheorem{lemma}[theorem]{Lemma}
\newtheorem{rem}[theorem]{Remark}
\newtheorem{definition}[theorem]{Definition}
\newcommand{\ket}[1]{\left| #1 \right\rangle}
\newcommand{\beq}{\begin{equation}}
\newcommand{\eeq}{\end{equation}}
\newcommand{\bea}{\begin{align}}
\newcommand{\eea}{\end{align}}
\definecolor{googleblue}{RGB}{34, 0, 204}
\definecolor{panblue}{RGB}{0,24,150}
\definecolor{carmine}{RGB}{150, 0, 24}
\newcommand{\blk}{\color{black}}
\newcommand{\projP}{%
\InputIfFileExists{Diagrams/projP.tikz}{}{\input{./figures/Diagrams/projP.tikz}}}
\newcommand{\projM}{%
\InputIfFileExists{Diagrams/projM.tikz}{}{\input{./figures/Diagrams/projM.tikz}}}
\newcommand{\incP}{%
\InputIfFileExists{Diagrams/incP.tikz}{}{\input{./figures/Diagrams/incP.tikz}}}
\newcommand{\incM}{%
\InputIfFileExists{Diagrams/incM.tikz}{}{\input{./figures/Diagrams/incM.tikz}}}
\newcommand{\nocontentsline}[3]{}
\let\oldaddcontentsline\addcontentsline
\newcommand{\tocless}[2]{%
  \let\addcontentsline=\nocontentsline#1{#2}
  \let\addcontentsline\oldaddcontentsline}
\begin{document}
\title{Accessible fragments of generalized probabilistic theories, cone equivalence, and applications to witnessing nonclassicality}
\author{John H. Selby}
\email{john.h.selby@gmail.com}
\affiliation{International Centre for Theory of Quantum Technologies, University of Gda\'nsk, 80-309 Gda\'nsk, Poland}
\author{David Schmid}
\affiliation{International Centre for Theory of Quantum Technologies, University of Gda\'nsk, 80-309 Gda\'nsk, Poland}
\affiliation{Perimeter Institute for Theoretical Physics, 31 Caroline Street North, Waterloo, Ontario Canada N2L 2Y5}
\affiliation{Institute for Quantum Computing and Department of Physics and Astronomy, University of Waterloo, Waterloo, Ontario N2L 3G1, Canada}
\author{Elie Wolfe}
\affiliation{Perimeter Institute for Theoretical Physics, 31 Caroline Street North, Waterloo, Ontario Canada N2L 2Y5}
\author{Ana Bel\'en Sainz}
\affiliation{International Centre for Theory of Quantum Technologies, University of Gda\'nsk, 80-309 Gda\'nsk, Poland}
\author{Ravi Kunjwal}
\affiliation{Centre for Quantum Information and Communication, Ecole polytechnique de Bruxelles,
	CP 165, Universit\'e libre de Bruxelles, 1050 Brussels, Belgium}
\author{Robert W. Spekkens}
\affiliation{Perimeter Institute for Theoretical Physics, 31 Caroline Street North, Waterloo, Ontario Canada N2L 2Y5}

\date{\today}

\begin{abstract}
The formalism of generalized probabilistic theories
 (GPTs) was originally developed as a way to characterize the landscape of conceivable physical theories. Thus, the GPT describing a given physical theory necessarily includes {\em all} physically possible processes. We here consider the question of how to provide a GPT-like characterization of {\em a particular experimental setup} within a given physical theory. We show that the resulting characterization is {\em not} generally a GPT in and of itself---rather, it is described by a more general mathematical object that we introduce and term an {\em accessible GPT fragment}. We then introduce an equivalence relation, termed cone equivalence, between accessible GPT fragments 
(and, as a special case, between standard GPTs).
We give a number of examples of experimental scenarios that are best described using accessible GPT fragments, and where moreover cone-equivalence arises naturally. We then prove that an accessible GPT fragment admits of a classical explanation if and only if every other fragment that is cone-equivalent to it also admits of a classical explanation. Finally, we leverage this result to prove several fundamental results regarding the experimental requirements for witnessing the failure of generalized noncontextuality. In particular, we prove that neither incompatibility among measurements nor the assumption of freedom of choice is necessary for witnessing failures of generalized noncontextuality, and, moreover, that such failures can be witnessed even using arbitrarily inefficient detectors. 
\end{abstract}

\maketitle
\tableofcontents

\section{Introduction}

The framework of generalized probabilistic theories~\cite{Hardy,GPT_Barrett,chiribella2010probabilistic}, or GPTs, was developed in order to study conceivable theories of nature within a single landscape of possible theories.  The framework is built on a minimal set of operational principles, based primarily on the fact that any physical theory must, at a minimum, make probabilistic predictions about the outcomes of experiments. It should be noted that for single systems (including infinite-dimensional ones) the framework has a much longer history; see, for example, Ref.~\cite{hartkamper1974foundations}. However, since the emergence of quantum information theory, the focus shifted to (finite-dimensional) composite systems, which brought renewed interest in the subject. It now is employed, 
for example, in the study of thermodynamics~\cite{barnum2010entropy,
chiribella2015entanglement,
chiribella2017microcanonical,
barnum2015entropy},
interference~\cite{
lee2017higher,
garner2018interferometric,
barnum2014higher,
dakic2014density,
lee2017higher,
barnum2017ruling,
horvat2020interference},
decoherence \cite{richens2017entanglement,
lee2018no,
scandolo2018possible,selby2017leaks}, 
computation~\cite{
lee2015computation,
barrett2019computational,
lee2015computation,
krumm2019quantum,
lee2016generalised,
lee2016deriving,
barnum2018oracles,
muller2012structure},
cryptography \cite{
sikora2018simple,
selby2018make,
sikora2019impossibility,
barnum2008nonclassicality,
lami2018ultimate,
DIQKD},
information processing~\cite{
bae2016structure,
barnum2011information,
GPT_Barrett,
JP17,
barnum2007generalized,
barnum2011information,
barnum2012teleportation,
barnum2013ensemble,
heinosaari2019no}, 
correlations~\cite{
czekaj2018bell,
barnum2010local,
czekaj2020correlations,
henson2014theory,
weilenmann2020analysing,
lami2018non,
cavalcanti2021witworld} and more~\cite{
masanes2019measurement,
galley2017classification,
galley2018any,
galley2020dynamics,
chiribella2011,
chiribella2014dilation,
chiribella2014distinguishability,
barnum2014local,
wilce2009four,
Royal-road,
barnum2016composites,
barnum2013symmetry,
wilce2011symmetry,
Masanes_2011,
masanes2013existence,
mueller2013three,aubrun2021entanglement}.
See \cite{garner2020characterization,plavala2021general,lami2018non} for recent reviews of the framework. 

In this work, we address the question of how one can describe {\em the particular states and effects that are accessible by a fixed experimental scenario} within a given GPT. As we will show, the mathematical structure that describes these accessible states and effects is not necessarily itself a GPT, but rather something that we term an {\em accessible GPT fragment}. 
These are more general than standard GPTs since (i) the accessible states and accessible effects in a given experimental set-up need not be tomographically complete for each other\footnote{Note that Gitton and Woods~\cite{Woods2020} introduce a notion of a `reduced space' based on quotienting relative to an equivalence relation wherein states are equivalent relative to the accessible effects and wherein effects are equivalent relative to the accessible states. This `reduced space' characterization of an experimental scenario is quite distinct from ours, and their resulting notion of classicality is quite distinct from generalized noncontextuality.}, and since (ii) there may be accessible states which are subnormalized 
and yet whose normalized counterparts 
 are not included in the fragment.\footnote{This latter point is similar to how in non-causal GPTs~\cite{chiribella2010probabilistic,dariano2014determinism} there are states that do not have a normalised counterpart~\cite{chiribella2010probabilistic}, but it occurs for a different reason; namely, the fact that a given experiment may not have access to repeat-until-success preparations.}

 Because accessible GPT fragments include all {\em and only} the states and effects accessible in a given experimental scenario, they are ideally suited to describing {\em relationships} between different experimental scenarios.
We give two specific applications of this idea: (i) to describe scenarios in which one measurement constitutes an {\em inefficient version} of another\footnote{That is, for each possible outcome of the original measurement, its inefficient counterpart has a nonzero probability of performing as expected, and a probability of failure (which is nonzero for at least one outcome), in which case it returns a `null' outcome.}, and (ii) to describe scenarios in which one measurement is a flag-convexification of another.  (Flag-convexification, introduced in Refs.~\cite{selby2021incompatibility,singh2021revealing}, is a process by which measurement settings are turned into measurement outcomes.)
  Moreover, as discussed in detail in Sec.~\ref{thryagtom}, we expect accessible GPT fragments to be a valuable tool for theory-agnostic tomography~\cite{mazurek2016experimental,mazurek2021experimentally,grabowecky2021experimentally}.
We also expect accessible GPT fragments to be critical for the development of a resource theory of generalized contextuality\footnote{Indeed, it was for this purpose that we originally conceived of them.}.

Next, we introduce an interesting equivalence relation that may hold between accessible GPT fragments (or, as a special case, between GPTs). We term this {\em cone equivalence}. Two accessible GPT fragments are said to be cone-equivalent if their state spaces define the same cone of states and their effect spaces define the same cone of effects. 

In both of the applications of accessible GPT fragments that we develop herein, cone equivalence plays a natural role: (i) In scenarios in which a given measurement device is implemented inefficiently, it follows that the accessible GPT fragments describing the original scenario and its inefficient counterpart are cone-equivalent. 
(ii) In scenarios where a measurement device is flag-convexified, it also follows that the accessible GPT fragments describing the original scenario and its flag-convexified counterpart are cone-equivalent. 

This concept of cone equivalence has also appeared indirectly in past work. 
 For instance, recognizing this fact explicitly allows us to 
restate one of the results of Ref.~\cite{wright2020general} as follows: 
A given GPT has a Gleason-like theorem (or equivalently, satisfies the no-restriction hypothesis for states) if and only if it is the unique GPT with the given state space which satisfies the full no-restriction hypothesis, {\em or one that 
 is cone-equivalent to that GPT}\footnote{Strictly, the GPTs that are cone-equivalent to an unrestricted~\cite{chiribella2010probabilistic} GPT correspond to the noisy-unrestricted GPTs of Ref.~\cite{wright2020general}, while their theorem singles out the slightly broader class of almost noisy-unrestricted GPTs which are those for which closure of the cones are equivalent to an unrestricted GPT. We will not focus on this distinction here as there is no operational distinction between the two classes.}. 

The notion of cone equivalence also plays an important role in understanding which GPTs and which accessible GPT fragments are classically explainable. We will now discuss classical explainability of GPTs and accessible GPT fragments, and then we will describe the role that cone equivalence plays.

One of the central questions in quantum foundations is how to make precise the senses in which our world cannot be explained classically.
In our view, the best foundational notion of classicality is the existence of a generalized-noncontextual ontological model\footnote{In fact, it is the slightly more general notion of Leibnizianity introduced in Ref.~\cite{schmid2020unscrambling}, but the distinction between the two will not be important here.}. This principle can be motivated by a useful methodological principle for theory construction dating back to Leibniz~\cite{Leibniz}.
Additionally, the existence of a  generalized-noncontextual ontological model for an operational theory coincides with two other independently motivated notions of classicality arising naturally in the study of GPTs~\cite{SchmidGPT,ShahandehGPT,schmid2020structure} and quantum optics~\cite{negativity,SchmidGPT,schmid2020structure}.
Generalized noncontextuality also emerges in the limit of sufficient decoherence~\cite{baldijao2021noncontextuality} or noise~\cite{PhysRevLett.115.110403,marvian2020inaccessible}. Furthermore, other key indicators of nonclassicality, such as violations of local causality~\cite{Bell} or observations of anomalous weak values under some conditions~\cite{AWV, KLP19}, are also instances of generalized contextuality. Finally, generalized contextuality has been proven to be a resource for information processing~\cite{POM,RAC,RAC2,Saha_2019,YK20}, computation~\cite{schmid2021only}, state discrimination~\cite{schmid2018contextual}, cloning~\cite{cloningcontext}, and metrology~\cite{contextmetrology}. 

The notion of generalized noncontextuality is most naturally defined within a framework of operational theories distinct from GPTs~\cite{gencontext}, wherein laboratory procedures that are operationally equivalent are not strictly equal. Such procedures are said to differ (only) by their {\em context}. Generalized noncontextuality, then, is the principle that an ontological representation of one's processes should be independent of their context. 
 Because the GPT that is associated to a given operational theory is obtained from the latter \blk by quotienting with respect to operational equivalences~\cite{chiribella2016quantum,schmid2020structure}, it follows that there are no contexts in a GPT, and hence no possibility of allowing for context-dependence\footnote{Indeed, one cannot even consider contexts in the sense of the Kochen-Specker theorem to be part of the fundamental structure of a GPT; see Appendix A of Ref.~\cite{schmid2020structure} for a more detailed justification of this claim.}.

It turns out~\cite{SchmidGPT},  however, that when mapping an operational
theory to a GPT by quotienting relative to operational equivalences, the constraint of explainability by
a generalized-noncontextual ontological model is mapped to the constraint of explainability by an ontological model.  Furthermore, it was shown in Ref.~\cite{SchmidGPT} that whether or not a GPT satisfies this constraint is determined by whether or not it satisfies a geometric criterion termed `simplex-embeddability'. 
 In particular, the state and effect spaces of such a GPT must be embeddable in those of a strictly classical (i.e., simplicial) GPT---one whose state space is a simplex and whose effect space is its dual.

 The fact that there is an equivalence between the existence of a generalised noncontextual ontological model for an operational theory, and the existence of a simplex embedding for the associated GPT, means that 
every motivation for taking generalized noncontextuality as one's notion of classicality for operational theories is just as much a motivation for taking simplex-embeddability as one's notion of classicality for GPTs. Naturally, then, we endorse it as the notion of classical explainability for GPTs.
 
Often, however, what one seeks to assess is not whether a particular \emph{operational theory} admits of a generalized-noncontextual model, but whether a particular \emph{operational scenario} or \emph{experiment} does. 
  Because such scenarios are  better described by  accessible GPT fragments, this raises the question of which accessible GPT fragments are classically explainable. It can be shown~\cite{simplex_forthcoming} (as a slight generalization of the results of Ref.~\cite{SchmidGPT}) that the geometric criterion implied by generalized noncontextuality for accessible GPT fragments is simply a version of the condition of simplex embeddability, adapted slightly so that it applies to accessible GPT fragments. 

In this work, we prove a useful result relating simplex embeddability and cone equivalence. In particular, we prove that two accessible GPT fragments which are cone-equivalent are either \emph{both} simplex-embeddable, or \emph{neither} is. In other words, two cone-equivalent accessible GPT fragments are either both nonclassical or neither is.

This result allows us to apply the notions of accessible GPT fragments and cone equivalence thereof to clarify the fundamental requirements for experimentally witnessing nonclassicality. In particular, we show that proofs of the failure of generalized noncontextuality can be found in which:
\begin{enumerate}[I.]
\item There is only a single measurement, and hence where
\begin{enumerate}
\item all measurements in the scenario are compatible, and
\item  no assumption of freedom of choice is required (i.e., no assumption that one can choose one's measurement setting in a manner that is independent of 
the choice of preparation and the hidden variable which in a putative classical model of the scenario mediates between the preparation and the measurement).\footnote{We take this to be the natural generalization (to prepare-measure scenarios) of the freedom of choice assumption commonly considered in Bell scenarios~\cite{wiseman2017causarum}.}
\end{enumerate}
\item One's measurement detectors have arbitrarily low (but nonzero) efficiency.
\end{enumerate}

It is straightforward to see each of these facts, given the preparatory work above. 
Consider first item I. For any scenario with multiple measurements, one can apply the flag-convexification procedure of Ref.~\cite{selby2021incompatibility} to turn the measurement settings into measurement outcomes, so that the resulting scenario has only a single measurement. As we will prove, this flag-convexified scenario is cone-equivalent to the original scenario, and hence either both are classically explainable, or both are not. Hence, this logic can be applied to \emph{any} proof of contextuality to obtain an alternative proof in the context of a new scenario with only a single measurement. This result can be viewed as the generalisation of the result of our companion paper, Ref.~\cite{selby2021incompatibility}, to arbitrary proofs of contextuality in arbitrary GPTs. 

Items I(a) and I(b) are immediate corollaries of Item I. Nonetheless, they are conceptually significant, insofar as they show a precise sense in which the experimental requirements and theoretical assumptions required to witness the failure of generalized noncontextuality are weaker than those required for witnessing the failure of Bell locality or Kochen-Specker noncontextuality.

 Finally, Item II follows from the fact that, as we will prove, a given prepare-measure scenario is cone-equivalent to a second scenario wherein the original measurements are implemented inefficiently; hence, either both are classically explainable, or both are not. 
Again, then, this logic can be applied to \emph{any} proof of contextuality to obtain an alternative proof in the context of a new scenario with detectors with arbitrarily low (but nonzero) efficiency. Unlike for experiments that aim to test Bell locality or Kochen-Specker noncontextuality, in experiments that aim to test generalized noncontextuality, then, there is no loophole involving detector inefficiency and hence no possibility of salvaging generalized noncontextuality via this loophole.

Moreover, we also show that proofs of the failure of generalized noncontextuality can be found in which there is a single source (as well as a single measurement). This shows that proofs of contextuality are possible where there are no setting variables at all; rather, such proofs rely only on passive observation of outcomes.

\section{Preliminaries}

The basis for this work is the framework of generalised probabilistic theories (GPTs). This framework enables one to describe a broad range of potential physical theories, based on the principle that ultimately any physical theory must be able to make probabilistic predictions about the outcomes of experiments. The GPT framework accommodates quantum theory and classical theory as special cases, but also allows for alternative physical theories such as quantum mechanics over the real numbers~\cite{hardy2012limited}, boxworld~\cite{GPT_Barrett}, or (the convex closure of) Spekkens' toy theory~\cite{spekkens2007evidence}. 

We will follow the diagrammatic approach to GPTs developed in Refs.~\cite{hardy2011reformulating,chiribella2010probabilistic}. In this approach, a given GPT $G$ consists of a collection of physical systems, which are diagrammatically denoted by wires, e.g.,
\beq
\begin{tikzpicture}
	\begin{pgfonlayer}{nodelayer}
		\node [style=none] (22) at (0, 0.75) {};
		\node [style=none] (23) at (0, -0.75) {};
		\node [style=right label] (24) at (0, -0.25) {$S$};
	\end{pgfonlayer}
	\begin{pgfonlayer}{edgelayer}
		\draw [qWire, in=270, out=90] (23.center) to (22.center);
	\end{pgfonlayer}
\end{tikzpicture}
}\ , \ \ %
\begin{tikzpicture}
	\begin{pgfonlayer}{nodelayer}
		\node [style=none] (22) at (0, 0.75) {};
		\node [style=none] (23) at (0, -0.75) {};
		\node [style=right label] (24) at (0, -0.25) {$S_1$};
	\end{pgfonlayer}
	\begin{pgfonlayer}{edgelayer}
		\draw [qWire] (23.center) to (22.center);
	\end{pgfonlayer}
\end{tikzpicture}
}%
\begin{tikzpicture}
	\begin{pgfonlayer}{nodelayer}
		\node [style=none] (22) at (0, 0.75) {};
		\node [style=none] (23) at (0, -0.75) {};
		\node [style=right label] (24) at (0, -0.25) {$S_2$};
	\end{pgfonlayer}
	\begin{pgfonlayer}{edgelayer}
		\draw [qWire] (23.center) to (22.center);
	\end{pgfonlayer}
\end{tikzpicture}
}\ , \ \ %
\begin{tikzpicture}
	\begin{pgfonlayer}{nodelayer}
		\node [style=none] (22) at (0, 0.75) {};
		\node [style=none] (23) at (0, -0.75) {};
		\node [style=right label] (24) at (0, -0.25) {$A$};
	\end{pgfonlayer}
	\begin{pgfonlayer}{edgelayer}
		\draw [cWire, in=270, out=90] (23.center) to (22.center);
	\end{pgfonlayer}
\end{tikzpicture}
}%
\begin{tikzpicture}
	\begin{pgfonlayer}{nodelayer}
		\node [style=none] (22) at (0, 0.75) {};
		\node [style=none] (23) at (0, -0.75) {};
		\node [style=right label] (24) at (0, -0.25) {$X$};
	\end{pgfonlayer}
	\begin{pgfonlayer}{edgelayer}
		\draw [cWire, in=270, out=90] (23.center) to (22.center);
	\end{pgfonlayer}
\end{tikzpicture}
}\,.
\eeq
The style of the wire denotes the type of system being represented.
Systems which describe classical degrees of freedom (e.g., those representing the controls and outcomes of experiments) are denoted by  thin wires, e.g., 
\beq
}\ , \ \ %
}\,,
\eeq
where the labels $A$ and $X$ are taken to be a (finite) set of possible classical states of the system, while systems which describe generic GPT systems are denoted by thick wires.

The GPT $G$ also describes a collection of processes such as:
\beq
\InputIfFileExists{Diagrams/generalProc1.tikz}{}{\input{./figures/Diagrams/generalProc1.tikz}}\ , \ \ %
\InputIfFileExists{Diagrams/generalProc2.tikz}{}{\input{./figures/Diagrams/generalProc2.tikz}}\ , \ \ %
\InputIfFileExists{Diagrams/generalProc3.tikz}{}{\input{./figures/Diagrams/generalProc3.tikz}}\ ,
\eeq
which describe the evolution of the physical systems. Processes can be wired together to form diagrams, such as
\beq
\InputIfFileExists{Diagrams/generalDiagram.tikz}{}{\input{./figures/Diagrams/generalDiagram.tikz}} \ ,
\eeq
which themselves describe other processes within the GPT. 

Processes with no inputs are known as states, and those with no outputs are known as effects. These are (respectively) depicted as
\beq
\begin{tikzpicture}
	\begin{pgfonlayer}{nodelayer}
		\node [style=none] (22) at (0, 0.75) {};
		\node [style=point] (23) at (0, -0.75) {$s$};
		\node [style=right label] (24) at (0, 0.25) {$S$};
	\end{pgfonlayer}
	\begin{pgfonlayer}{edgelayer}
		\draw [qWire] (23) to (22.center);
	\end{pgfonlayer}
\end{tikzpicture}
}\ , \ \ %
\begin{tikzpicture}
	\begin{pgfonlayer}{nodelayer}
		\node [style=none] (22) at (0, -0.75) {};
		\node [style=copoint] (23) at (0, 0.75) {$e$};
		\node [style=right label] (24) at (0, -0.25) {$S$};
	\end{pgfonlayer}
	\begin{pgfonlayer}{edgelayer}
		\draw [qWire] (23) to (22.center);
	\end{pgfonlayer}
\end{tikzpicture}
}.
\eeq

Processes with neither inputs nor outputs correspond to probabilities, e.g.,  
\beq
\begin{tikzpicture}
	\begin{pgfonlayer}{nodelayer}
		\node [style=scalar] (22) at (0, 0) {$p$};
	\end{pgfonlayer}
\end{tikzpicture}
} \ \ \in \ \ [0,1].
\eeq
Note, for example, that this means that the composition of a state with an effect, such as
\beq
\begin{tikzpicture}
	\begin{pgfonlayer}{nodelayer}
		\node [style=point] (22) at (0, -0.75) {$s$};
		\node [style=copoint] (23) at (0, 0.75) {$e$};
		\node [style=right label] (24) at (0, 0) {$S$};
	\end{pgfonlayer}
	\begin{pgfonlayer}{edgelayer}
		\draw [qWire] (23) to (22);
	\end{pgfonlayer}
\end{tikzpicture}
}\ \ = \ \ %
} \,,
\eeq
is equal to a probability, namely that of observing the effect $e$ given the state $s$ of system $S$.

Each system comes with a specified discarding effect, termed the unit effect, denoted by $u_S$ and depicted as 
\beq
\begin{tikzpicture}
	\begin{pgfonlayer}{nodelayer}
		\node [style=none] (14) at (0, 0) {};
		\node [style=none] (15) at (0, -0.75) {};
		\node [style=right label] (21) at (0, -0.75) {$S$};
		\node [style=upground] (22) at (0, 0.25) {};
	\end{pgfonlayer}
	\begin{pgfonlayer}{edgelayer}
		\draw [qWire, in=-90, out=90] (15.center) to (14.center);
	\end{pgfonlayer}
\end{tikzpicture}
} \,.
\eeq
The unit effect represents ignoring a particular physical system (e.g., in quantum theory this would correspond to the operation of tracing out a system). 

Note that  a process that is deterministic---i.e., can be implemented without postselection---must satisfy the condition\footnote{This condition has been highlighted in Ref.~\cite{chiribella2010probabilistic}, which referred to it as the `causality' condition. We do not endorse this terminology for the reasons espoused in Ref.~\cite{schmid2020unscrambling}.} 
\beq\label{eq:causeq}
\InputIfFileExists{Diagrams/causality1.tikz}{}{\input{./figures/Diagrams/causality1.tikz}} \ \ = \ \ %
\begin{tikzpicture}
	\begin{pgfonlayer}{nodelayer}
		\node [style=none] (14) at (0, 0) {};
		\node [style=none] (15) at (0, -0.75) {};
		\node [style=right label] (21) at (0, -0.75) {$S_1$};
		\node [style=upground] (22) at (0, 0.25) {};
	\end{pgfonlayer}
	\begin{pgfonlayer}{edgelayer}
		\draw [qWire, in=-90, out=90] (15.center) to (14.center);
	\end{pgfonlayer}
\end{tikzpicture}
} \,.
\eeq
\emph{All} processes within the GPT must satisfy\footnote{This order on effects is defined by $e\leq f$ if and only if there exists an effect $e'$ such that $e+e'=f$.}
\beq
\InputIfFileExists{Diagrams/causality1.tikz}{}{\input{./figures/Diagrams/causality1.tikz}} \ \ \leq \ \ %
}\,,
\eeq
which represents the case of processes that may only happen with some non unit probability, such as the outcomes of measurements.
For such a probabilistic process, there exists another process
\beq
\InputIfFileExists{Diagrams/causality3.tikz}{}{\input{./figures/Diagrams/causality3.tikz}}
\eeq
such that
\beq
\InputIfFileExists{Diagrams/causality5.tikz}{}{\input{./figures/Diagrams/causality5.tikz}} \ \ := \ \ %
\InputIfFileExists{Diagrams/causality4.tikz}{}{\input{./figures/Diagrams/causality4.tikz}} \ \ + \ \ %
\InputIfFileExists{Diagrams/causality3.tikz}{}{\input{./figures/Diagrams/causality3.tikz}}
\eeq
is a valid deterministic process (i.e., it satisfies Eq.~\eqref{eq:causeq}).

Any system $S$ within the GPT has an associated space of states, denoted $\Omega_S$, which can be characterised as a convex set spanning\footnote{Recall that the span of a set of vectors in a real vector space is defined as $\mathsf{Span}[X] := \{\sum_i \lambda_i x_i | x_i \in X, \lambda_i \in \mathds{R} \}.$
} some finite-dimensional real vector space, which (with slight abuse of notation) we also denote as $S$.  
We will take the set  $\Omega_S$ to contain both normalized and unnormalized states for the system $S$, where the set of normalized states is defined as
\beq
\left\{ %
} \ \ \Bigg\vert \ \  %
\begin{tikzpicture}
	\begin{pgfonlayer}{nodelayer}
		\node [style=none] (14) at (0, 0.5) {};
		\node [style=point] (15) at (0, -0.75) {$s$};
		\node [style=right label] (21) at (0, 0) {$S$};
		\node [style=upground] (22) at (0, 0.75) {};
	\end{pgfonlayer}
	\begin{pgfonlayer}{edgelayer}
		\draw [qWire, in=-90, out=90] (15) to (14.center);
	\end{pgfonlayer}
\end{tikzpicture}
} \ \ = \ \ 1 \right\}\,.
\eeq
Geometrically, $\Omega_S$ is the convex hull of the set of normalized states and the origin (i.e., the state of normalization 0).

Any system $S$ within the GPT will also have an associated convex space of effects, denoted $\mathcal{E}_S$, living in the dual vector space to  
$S$ (that is, in the space of linear functionals on 
$S$
), denoted $S^*$.
We will also require a notion of duality for convex sets, which we will also denote by $*$ (following a standard abuse of notation).  Specifically, if $\Omega_S$ is a convex set, then the dual set $\Omega_S^*$ is defined as the set of effect vectors in the dual vector space that are \emph{logically possible} in the sense that they yield valid probabilities for all states, 
\begin{equation}\label{logicallypossibleeffects}
\Omega_S^* := \{  e | 0 \le e(s) \le 1, \forall s \in \Omega_S \}.
\end{equation}
 Clearly therefore,
  $\mathcal{E}_S \subseteq \Omega_S^*$.
When this inclusion relation is an equality, the GPT $G$ is said to satisfy the no-restriction hypothesis for effects~\cite{chiribella2010probabilistic}.  Both the origin of $S^*$,
termed the `zero' effect, and the unit effect $u_S$ must be elements of $\mathcal{E}_S$. 

Similarly, we can define the space of logically possible states as vectors in $S$ which yield valid probabilities for all effects,
\beq
\mathcal{E}_S^*:=\{s|0\leq e(s)\leq 1,\forall e \in\mathcal{E}_S\}.
\eeq
Clearly we also have that $\Omega_S \subseteq \mathcal{E}_S^*$. When this inclusion relation is an equality, the GPT $G$ is said to satisfy the no-restriction hypothesis for states~\cite{wright2020general}.

Probabilities are computed using the bilinear evaluation map
\begin{align}
B_S&:S^* \times S \to \mathds{R}\\
&:: (e,s) \mapsto e(s),
\end{align}
where if $e\in \mathcal{E}_S$ and $s\in \Omega_S$, we have that 
\beq \label{eq:standardgptProb}
B_S(e,s)=e(s) := %
}\ \  \in [0,1]\,.
\eeq

An arbitrary system within a GPT has associated to it a tuple $(\Omega_S, \mathcal{E}_S, B_S, u_S)$. This tuple is itself often referred to as a GPT,  even though a full generalized probabilistic theory should also describe the possibilities for parallel composition of systems and transformations on systems. We will for simplicity make use of this terminology here, since the remainder of the article focuses only on prepare-measure scenarios with no parallel composition. 
It should be clear from context whether by `GPT' we are referring to this tuple or to the full compositional theory in which it lives.
Note also that the separate specification of the evaluation map and the unit is redundant, since the unit effect is uniquely determined by the prediction map. We have only included it for ease of comparison with the more general notion of accessible GPT fragments that we will introduce below.

\subsection{The notion of classicality for GPTs}\label{se:classicGPT}

As discussed in the introduction, a generalized probabilistic theory is classically-explainable if it is \emph{simplex-embeddable}, a concept first introduced in Ref.~\cite{SchmidGPT}. The following is a mild generalization of the definition appearing therein, in order to incorporate subnormalized states. (This generalization does not change the relationship between simplex embeddability and classical explainability, i.e., generalized noncontextuality.)

\begin{definition}\textbf{Simplex embeddability of standard GPTs.}\label{simplexembeddability}
\\
A GPT $G_S  =(\Omega_S,\mathcal{E}_S,B_S,u_S)$ is said to be simplex-embeddable if there exists some vector space dimension $n$ and a pair of linear maps $\iota:S\to \mathds{R}^n$ and $\kappa:S^*\to {\mathds{R}^n}^*$, such that for all $s\in \Omega_S$ and $e\in \mathcal{E}_S$ one has
\begin{align}
\iota(s) &\in \Delta_{\text{sub}}^n, \\
\kappa(e) &\in {{\Delta^n_{\text{sub}}}}^*, \\
\kappa(e)[\iota(s)]  &= B_S(e,s),\\
\kappa(u_S) &= \mathbf{1}_n.
\end{align}
where $\Delta^n_{\text{sub}}$ is the  simplex consisting of the convex hull of the zero vector and the standard basis vectors (those of the form $(1,0,\ldots,0), (0,1,\ldots, 0)$ etc.), ${\Delta^n_{\text{sub}}}^*$ is the dual convex set, and $\mathbf{1}_n$ is the vector $(1,1,...,1)$ of $n$ 1's.
\end{definition}
Note that the dimension $n$ of the simplex may be larger than the dimension of the given GPT system $S$. An explicit example of the necessity of this dimension mismatch is given for the case of standard GPTs in Appendix~D of Ref.~\cite{SchmidGPT}.

As motivated in the introduction, an operational theory is classically explainable if and only if it admits of a generalized-noncontextual representation. However, as discussed in detail in Ref.~\cite{SchmidGPT}, it does not make sense to ask if a generalized probabilistic theory (i.e., a \emph{quotiented} operational theory \cite{chiribella2010probabilistic,schmid2020structure}) admits of a generalized-noncontextual representation, since processes in a GPT do not have any contexts on which one's representation could possibly depend. However, Ref.~\cite{SchmidGPT} shows that an operational theory admits of a generalized-noncontextual ontological model if and only if the GPT which is obtained from it by quotienting can be embedded within a simplicial GPT via a linear map.

Hence, every motivation for taking realizability by a generalized noncontextual ontological model as one's notion of classical explainability for operational theories (like those listed in the introduction) is \emph{also} a motivation for taking simplex-embeddability as one's notion of classical explainability for GPTs. Furthermore, this notion of simplex-embeddability can be independently motivated as a notion of classical explainability, since simplicial GPTs are those wherein all possible measurements are compatible (and so are universally agreed to represent classical theories). Thus, the two notions of classical explainability mutually support one another. 

In this work, we are not primarily interested in assessing classicality of a GPT in its entirety. Rather, we are interested in assessing the classicality of the statistics that arise \emph{within that GPT for a particular prepare-measure scenario}, and so (in Definition~\ref{simplexembeddingfrag}) we will adapt the definition of simplex embeddability appropriately. 

We now introduce the scenarios of interest in more detail.

\subsection{Prepare-measure scenarios}

We now give a description of prepare-measure scenarios in generalized probabilistic theories.

Such scenarios are described by a set of sources and a set of measurements on some system $S$.
We index the possible sources by the set $X$, and label the classical outcome of the source by $A$:
\beq
\left\{%
\InputIfFileExists{Diagrams/QS.tikz}{}{\input{./figures/Diagrams/QS.tikz}}\right\}_{x\in X}.
\eeq
In a given run of the experiment, the source generates a classical outcome and a physical state of the system $S$, where one can think of the classical outcome as a flag which labels what preparation was done on $S$. (Note that a preparation procedure is a special case of a source where the classical output is unary.)
We index the set of measurements by the set $Y$, and label the classical outcome of the measurement by $B$:
\beq
\left\{%
\InputIfFileExists{Diagrams/meter.tikz}{}{\input{./figures/Diagrams/meter.tikz}} \right\}_{y\in Y}.
\eeq
These compose to give a set of joint distributions over source-and-measurement outcomes ($A$ and $B$), indexed by the possible choices of sources and measurements ($X$ and $Y$):
\beq
\left\{%
\InputIfFileExists{Diagrams/setofdist.tikz}{}{\input{./figures/Diagrams/setofdist.tikz}}\right\}_{x\in X, y\in Y}.
\eeq

For our purposes, however, it is more convenient to combine the set of sources together into a single process, and to combine the set of measurements together into a single process. That is, instead of working with a set of sources, we instead work with a single \emph{multisource}
\beq
\InputIfFileExists{Diagrams/QMS.tikz}{}{\input{./figures/Diagrams/QMS.tikz}}\,,
\eeq
which recovers each of the sources in the set by suitably choosing the setting variable, that is,
\beq
\InputIfFileExists{Diagrams/MStoS.tikz}{}{\input{./figures/Diagrams/MStoS.tikz}} \  = \ %
\InputIfFileExists{Diagrams/QS.tikz}{}{\input{./figures/Diagrams/QS.tikz}}
\eeq
for all $x\in X$.
Similarly, instead of working with a set of measurements, we work with a single \emph{multimeter}
\beq
\InputIfFileExists{Diagrams/QMM.tikz}{}{\input{./figures/Diagrams/QMM.tikz}}
\eeq
which recovers each of the measurements in the set by suitably choosing the setting variable; that is:
\beq
\InputIfFileExists{Diagrams/meter.tikz}{}{\input{./figures/Diagrams/meter.tikz}} \ \ = \ %
\InputIfFileExists{Diagrams/multimeter.tikz}{}{\input{./figures/Diagrams/multimeter.tikz}}
\eeq 
for all $y\in Y$. The composition of these yields a single stochastic map (or equivalently, a single conditional probability distribution):
\beq \label{eq:PMScenario}
\InputIfFileExists{Diagrams/PMScenario.tikz}{}{\input{./figures/Diagrams/PMScenario.tikz}}\,
\eeq
which gives the conditional probability  $p(ab|xy)$ of the pair of outcomes given the pair of inputs, where $a$ (resp.~$b,x,y$) denotes the value that the classical variable $A$ (resp.~$B,X,Y$) takes. This switch in perspective from sets of distributions to a single conditional distribution, has also been used within the sphere of Bell correlations, see Ref.~\cite{Bellreview}.

\subsection{Incompatibility of measurements}

A set of measurements (in our case, the set realized by a given multimeter) is said to be \emph{compatible} if there exists a single measurement that can simulate the statistics of any measurement in the set \cite{HMZ16},  otherwise the set is said to be \emph{incompatible}. This can be expressed simply in terms of the multimeter which realizes the measurements in the set, as follows.

\begin{definition}{\bf Compatibility} 
A set of measurements $\{M_y\}_{y\in Y}$ is said to be compatible if there exists a measurement $\mathcal{M}$ with some outcome set $Z$ and a controlled postprocessing $\mathcal{P}$ of $Z$ such that
\beq
\InputIfFileExists{Diagrams/parentMeasurementProcessed.tikz}{}{\input{./figures/Diagrams/parentMeasurementProcessed.tikz}} \ \ = \ \ %
\InputIfFileExists{Diagrams/QMM.tikz}{}{\input{./figures/Diagrams/QMM.tikz}}.
\eeq
\end{definition}
Note that a set comprised of a single measurement (hence, where $Y$ is the singleton set $\{*\}$) 
is necessarily compatible; one can simply take $\mathcal{M}=M=M_*$ and $\mathcal{P}=\mathds{1}$. 

\section{Accessible GPT fragments} \label{realizedgpts}

Typically, the GPT state and effect spaces associated to a given system are taken to include \emph{all} states and effects that can be physically realized on the system according to the theory. 
In contrast, we here want to describe \emph{only} those states and effects that can be realized \emph{within a given multisource-multimeter scenario}. In this section, we formalize the latter notion and term the resulting object an \emph{accessible GPT fragment}. We will refer to GPTs of the former (traditional) sort \emph{standard GPTs} when contrasting them with accessible GPT fragments. 

Note that a standard GPT is a special case of an accessible GPT fragment wherein the multisource and multimeter are able to access all physically possible states and effects.
The additional generality in the notion of an accessible GPT fragment will be critical for our later results, e.g., for describing physical situations involving flag-convexification (Section~\ref{firstexfc}) and detector inefficiency (Section~\ref{subsec:Inefficiencies}).
(We also expect it to be useful for defining the relevant notion of resourcefulness in the resource theory of failures of generalized noncontextuality.)

Recall that a GPT system $S$ has convex state space $\Omega_S$ and convex effect space $\mathcal{E}_S$.
A given multisource $P$ will only be able to prepare some convex subset of the states in $\Omega_S$, namely, those of the form
\beq
\begin{tikzpicture}
	\begin{pgfonlayer}{nodelayer}
		\node [style=none] (5) at (0, 1) {};
		\node [style=point] (6) at (0, -0.25) {$s$};
		\node [style=right label] (14) at (0, 0.75) {$S$};
	\end{pgfonlayer}
	\begin{pgfonlayer}{edgelayer}
		\draw [qWire] (6) to (5.center);
	\end{pgfonlayer}
\end{tikzpicture}
} =  %
\InputIfFileExists{Diagrams/OmegaP3.tikz}{}{\input{./figures/Diagrams/OmegaP3.tikz}} ,
\eeq
where $c$ is some substochastic comb, defined as follows. For some arbitrary  classical system  $Z$, $c$ can be decomposed as a bipartite subnormalized distribution $q$ over $X$ and $Z$, composed with some bipartite response function $r$ over $A$ and $Z$:
\beq \label{eq:crc}
\InputIfFileExists{Diagrams/classComb.tikz}{}{\input{./figures/Diagrams/classComb.tikz}} \ \ = \ \ %
\InputIfFileExists{Diagrams/classComb1.tikz}{}{\input{./figures/Diagrams/classComb1.tikz}}.
\eeq
We will refer to the set of such objects as $\mathsf{SubStochComb}$, or, in the case where $X$ is trivial, as $\mathsf{SubStochEff}$.  

Typically, the convex set of states a multisource $P$ can prepare 
 will be a strict subset of $\Omega_S$, and it may even be a lower-dimensional convex set. For example, one could have a multisource on a qubit that can only prepare the convex set corresponding to mixtures of the states $\ket{0}$ and $\ket{1}$. It will therefore be convenient to view the state space for the accessible GPT fragment as living in the vector space spanned by the accessible states, rather than the original GPT vector space.
We denote this span by 
\beq
S_P := \mathsf{Span}\left[\left\{%
\InputIfFileExists{Diagrams/OmegaP3.tikz}{}{\input{./figures/Diagrams/OmegaP3.tikz}}\middle| c \in \mathsf{SubStochComb} \right\}\right]\,.
\eeq 

Taking this view requires introducing two maps which have little physical significance, but are useful for proper bookkeeping.
Since $S_P$ is a subspace of $S$, one may define a projection map $S\to S_P$ and an inclusion map $S_P\to S$. We denote these diagrammatically by: 
\begin{align}
\projP \, &: \, S\to S_P\, :: %
\begin{tikzpicture}
	\begin{pgfonlayer}{nodelayer}
		\node [style=none] (5) at (0, 0.5) {};
		\node [style=point] (6) at (0, -0.5) {$v$};
		\node [style=right label] (7) at (0, 0.25) {$S$};
	\end{pgfonlayer}
	\begin{pgfonlayer}{edgelayer}
		\draw [qWire] (6) to (5.center);
	\end{pgfonlayer}
\end{tikzpicture}
} \mapsto  %
\InputIfFileExists{Diagrams/projP2.tikz}{}{\input{./figures/Diagrams/projP2.tikz}},\\
\incP \, &: \, S_P\to S\, :: %
\begin{tikzpicture}
	\begin{pgfonlayer}{nodelayer}
		\node [style=none] (5) at (0, 0.5) {};
		\node [style=point] (6) at (0, -0.5) {$u$};
		\node [style=right label] (7) at (0, 0.25) {$S_P$};
	\end{pgfonlayer}
	\begin{pgfonlayer}{edgelayer}
		\draw [qWire] (6) to (5.center);
	\end{pgfonlayer}
\end{tikzpicture}
} \mapsto  %
\InputIfFileExists{Diagrams/incP2.tikz}{}{\input{./figures/Diagrams/incP2.tikz}}.
\end{align}
 Note that the inclusion map followed by the projection map is the identity,
\beq
\InputIfFileExists{Diagrams/proj1.tikz}{}{\input{./figures/Diagrams/proj1.tikz}} = %
\begin{tikzpicture}
	\begin{pgfonlayer}{nodelayer}
		\node [style=none] (2) at (0, 1) {};
		\node [style=none] (3) at (0, -1) {};
		\node [style=right label] (4) at (0, -0.25) {$S_P$};
	\end{pgfonlayer}
	\begin{pgfonlayer}{edgelayer}
		\draw [qWire] (2.center) to (3.center);
	\end{pgfonlayer}
\end{tikzpicture}
},
\eeq
and the projection map followed by the inclusion map is the identity \emph{on the accessible states},
\beq
\InputIfFileExists{Diagrams/proj5.tikz}{}{\input{./figures/Diagrams/proj5.tikz}} = %
\InputIfFileExists{Diagrams/QMS2.tikz}{}{\input{./figures/Diagrams/QMS2.tikz}}. 
\eeq

 Hence, we define the state space for the accessible GPT fragment as 
\beq
\Omega_P := \left\{ %
\begin{tikzpicture}
	\begin{pgfonlayer}{nodelayer}
		\node [style=none] (5) at (0, 0.5) {};
		\node [style=point] (6) at (0, -0.75) {$s$};
		\node [style=right label] (14) at (0, 0.25) {$S_P$};
	\end{pgfonlayer}
	\begin{pgfonlayer}{edgelayer}
		\draw [qWire] (6) to (5.center);
	\end{pgfonlayer}
\end{tikzpicture}
} = 
\InputIfFileExists{Diagrams/OmegaPn.tikz}{}{\input{./figures/Diagrams/OmegaPn.tikz}} \middle|  c \in \mathsf{SubStochComb} \right\}.
\eeq 
It is not hard to see that this definition of the state space implies constraints on the geometry of the set $\Omega_P$. Fig.~\ref{fig:tofragment} provides an illustrative example and some discussion of these constraints.

We now turn to the effects. The effects in $\mathcal{E}_S$ which can be achieved using the multimeter $M$ are those of the form
\beq
\begin{tikzpicture}
	\begin{pgfonlayer}{nodelayer}
		\node [style=none] (5) at (0, -1) {};
		\node [style=copoint] (6) at (0, 0.25) {$e$};
		\node [style=right label] (14) at (0, -0.75) {$S$};
	\end{pgfonlayer}
	\begin{pgfonlayer}{edgelayer}
		\draw [qWire] (6) to (5.center);
	\end{pgfonlayer}
\end{tikzpicture}
} = %
\InputIfFileExists{Diagrams/calEM.tikz}{}{\input{./figures/Diagrams/calEM.tikz}},
\eeq
where $d$ is a substochastic comb. Again, these will form a convex subset of $\mathcal{E}_S$ which will span a subspace of $S^*$. We denote the subspace spanned by this set as $S_M^*$, and we define a (contravariant) projection map
\beq
\projM \, : \, S^*\to S_M^* :: %
\begin{tikzpicture}
	\begin{pgfonlayer}{nodelayer}
		\node [style=none] (5) at (0, -0.75) {};
		\node [style=copoint] (6) at (0, 0.25) {$v'$};
		\node [style=right label] (7) at (0, -0.5) {$S$};
	\end{pgfonlayer}
	\begin{pgfonlayer}{edgelayer}
		\draw [qWire] (6) to (5.center);
	\end{pgfonlayer}
\end{tikzpicture}
} \mapsto %
\InputIfFileExists{Diagrams/projM2.tikz}{}{\input{./figures/Diagrams/projM2.tikz}} \,,
\eeq
and a (contravariant) inclusion map 
\beq
\incM \, : \, S_M^* \to S^* :: %
\begin{tikzpicture}
	\begin{pgfonlayer}{nodelayer}
		\node [style=none] (5) at (0, -0.5) {};
		\node [style=copoint] (6) at (0, 0.5) {$u'$};
		\node [style=right label] (7) at (0, -0.25) {$S_M$};
	\end{pgfonlayer}
	\begin{pgfonlayer}{edgelayer}
		\draw [qWire] (6) to (5.center);
	\end{pgfonlayer}
\end{tikzpicture}
} \mapsto %
\InputIfFileExists{Diagrams/incM2.tikz}{}{\input{./figures/Diagrams/incM2.tikz}}\,  ,
\eeq
where these are read top-to-bottom, as we view these as acting contravariantly (i.e., on the dual spaces). 
The inclusion map followed by the projection map is the identity:
\beq
\InputIfFileExists{Diagrams/proj2.tikz}{}{\input{./figures/Diagrams/proj2.tikz}} = %
\begin{tikzpicture}
	\begin{pgfonlayer}{nodelayer}
		\node [style=none] (2) at (0, 1) {};
		\node [style=none] (3) at (0, -1) {};
		\node [style=right label] (4) at (0, -0.25) {$S_M$};
	\end{pgfonlayer}
	\begin{pgfonlayer}{edgelayer}
		\draw [qWire] (2.center) to (3.center);
	\end{pgfonlayer}
\end{tikzpicture}
}\,.
\eeq

Furthermore, the projection map followed by the inclusion map is the identity \emph{on the accessible effects}:
\beq
\InputIfFileExists{Diagrams/proj6.tikz}{}{\input{./figures/Diagrams/proj6.tikz}} = %
\InputIfFileExists{Diagrams/QMM2.tikz}{}{\input{./figures/Diagrams/QMM2.tikz}}.
\eeq

We can then define the effect space for the accessible GPT fragment as
\beq
\mathcal{E}_M := \left\{ %
\begin{tikzpicture}
	\begin{pgfonlayer}{nodelayer}
		\node [style=none] (5) at (0, -0.75) {};
		\node [style=copoint] (6) at (0, 0.5) {$e$};
		\node [style=right label] (14) at (0, -0.5) {$S_M$};
	\end{pgfonlayer}
	\begin{pgfonlayer}{edgelayer}
		\draw [qWire] (6) to (5.center);
	\end{pgfonlayer}
\end{tikzpicture}
}= 
\InputIfFileExists{Diagrams/camEMn.tikz}{}{\input{./figures/Diagrams/camEMn.tikz}}  \middle| d \in \mathsf{SubStochComb}  \right\}.
\eeq 

It is not hard to see that this definition of the effect space implies constraints on the geometry of the set $\mathcal{E}_M$, analogous to those on the state spaces (illustrated in Fig.~\ref{fig:tofragment}).

 We can now observe the first key formal distinction between accessible GPT fragments and GPTs in the traditional sense: the vector space $S_P$ associated to the state space is not necessarily the same as the vector space $S_M$ associated to the effect space. In a traditional GPT, these two vectors spaces are equal, as follows from the assumption that the states are tomographically complete for the effects and vice-versa. This can fail to be the case for an accessible GPT fragment because we are not requiring that the multisource and multimeter need to be tomographically complete for one another.

Given that the states and effects are no longer associated to the same vector space, we can no longer compute probabilities simply by composition, as in Eq.~\eqref{eq:standardgptProb}. In general, the sequential composition of
\beq
} \in \mathcal{E}_M \ \ \text{and} \ \ %
} \in \Omega_P
\eeq
is not even defined, as their types do not always match. 
To compute probabilities, then, one first embeds each system type ($S_P$ and $S_M$) back into $S$ using their respective inclusion maps, and then one computes the probability using the probability rule in the original GPT.

Explicitly, the probability rule for an accessible GPT fragment is given by the bilinear map
\begin{align}
{B}_{PM} &: S_M^*\times S_P \to \mathds{R} \\
 &:: \left(%
} ,%
}\right) \mapsto %
\InputIfFileExists{Diagrams/probProj.tikz}{}{\input{./figures/Diagrams/probProj.tikz}}\,. \label{eq:GPTProb}
\end{align}
where if $e \in \mathcal{E}_M $ and $ s\in  \Omega_P$ we have that
\beq
{B}_{PM}(e,s) =  %
\InputIfFileExists{Diagrams/probProj.tikz}{}{\input{./figures/Diagrams/probProj.tikz}} \in [0,1].
\eeq
\begin{figure*}[htb!]
\[%
\InputIfFileExists{Diagrams/Fig1New.tikz}{}{\input{./figures/Diagrams/Fig1New.tikz}}\]
\caption[.]{ 
A series of figures illustrating how the state space of an accessible GPT fragment arises from its associated multisource $P$.  In these examples, the grey region demarcates the full set of \emph{normalised} states in the underlying GPT within which the fragment is embedded.  In (a),  we depict the atomic preparations---namely, the state vectors describing the output of the multisource for a fixed setting value $x\in X$ and a fixed outcome $a\in A$ of the multisource. In this example, setting $x=0$ leads to two possible outcomes, with the two associated (subnormalized) state vectors depicted in red; similarly for $x=1$ in green. Setting $x=2$ has only a single outcome, so the corresponding state vector -- depicted in blue --  is normalized and lies in the  grey region.  (b) we show all of these atomic preparations on a single picture,  (c) Even for a fixed setting value, one can access more general preparations by considering outcomes described by a generic effect $e$; that is, by considering outcomes arising from arbitrary postprocessings of the atomic outcomes. In doing so, one can access all the state vectors in the red (for $x=0$), green (for $x=1$), and blue (for $x=2$) regions. Mathematically, these regions are zonotopes. (A zonotope is a convex set constructed by taking some set of vectors $v_i$ and then taking the set of all vectors that can be reached by linear combinations of the form $\sum_i \alpha_i v_i$ such that $\alpha_i \in [0,1]$.) Note that every such zonotope has a single vector in the underlying normalized state space (the grey region), and that this vector is generated by the corresponding diagram when $e$ is the unit effect (so that one effectively ignores the outcome of the multisource).  (d) The most general states that are achievable for a given multisource can be accessed by considering arbitrary pre-and-postprocessings, depicted here by the comb $c$ in the diagram. The accessible GPT fragment's state space corresponds to the displayed polytope, which is the convex hull of the zonotopes from (b). This example is illustrative of the generic case. That is, the accessible state space is always given by the convex hull of a set of zonotopes each contained within the full state space, where the maximal element of each zonotope lies in the hyperplane of normalized states. Note, moreover, that any such geometry can be realized by some multisource within the full GPT. Similarly, the accessible effect space is the convex hull of a set of zonotopes each contained within the full effect space where, in this case, the maximal elements are all simply the unit effect. Again we find that any such geometry can be realized by some multimeter within the full GPT.
Note that in this example, the subset of normalized states in the fragment forms a strict subset of the normalized states in the fundamental underlying GPT.
}\label{fig:tofragment}
\end{figure*}

Finally, the unit effect $u_M$ for the accessible GPT fragment  is defined as the unit $u_S$ for the GPT system $S$ preceded by projection into the subspace $S_M$. Note that $u_M$ is the effect that is realized for any deterministic use of the multimeter $M$, wherein one coarse-grains over all outcomes $B$ (and chooses the value of $Y$ according to any normalized probability distribution $p$). In other words, one has
\beq
\begin{tikzpicture}
	\begin{pgfonlayer}{nodelayer}
		\node [style=none] (5) at (-0.5, 0) {};
		\node [style=none] (6) at (-0.5, -1.5) {};
		\node [style=right label] (14) at (-0.5, -1.25) {$S_M$};
		\node [style=upground] (15) at (-0.5, 0.25) {};
	\end{pgfonlayer}
	\begin{pgfonlayer}{edgelayer}
		\draw [qWire] (6.center) to (5.center);
	\end{pgfonlayer}
\end{tikzpicture}
} := \ \ %
\InputIfFileExists{Diagrams/GPTUnit1.tikz}{}{\input{./figures/Diagrams/GPTUnit1.tikz}} \ \ = \ \ %
\InputIfFileExists{Diagrams/GPTUnit.tikz}{}{\input{./figures/Diagrams/GPTUnit.tikz}}. \label{defnunit}
\eeq
This last equality follows from the fact that, using Eq.~\eqref{eq:causeq},
\beq
\InputIfFileExists{Diagrams/GPTUnitn.tikz}{}{\input{./figures/Diagrams/GPTUnitn.tikz}}\ \  =\ \ %
\InputIfFileExists{Diagrams/GPTUnit4n.tikz}{}{\input{./figures/Diagrams/GPTUnit4n.tikz}}\ \  =\ \  %
\begin{tikzpicture}
	\begin{pgfonlayer}{nodelayer}
		\node [style=none] (5) at (-0.5, 0) {};
		\node [style=none] (6) at (-0.5, -1.25) {};
		\node [style=right label] (14) at (-0.5, -0.5) {$S$};
		\node [style=upground] (15) at (-0.5, 0.25) {};
	\end{pgfonlayer}
	\begin{pgfonlayer}{edgelayer}
		\draw [qWire] (6.center) to (5.center);
	\end{pgfonlayer}
\end{tikzpicture}
}  .
\eeq

Note that in an accessible GPT fragment (just as in a standard GPT), it holds that for any effect $e \in \mathcal{E}_M$, there exists another effect $e^\perp \in \mathcal{E}_M$ such that 
\begin{align}\label{eq:MeasCompletion}
e+e^\perp = u_M.
\end{align}
 This is proven in Appendix~\ref{proofcomplem}.

\begin{definition}{\bf Accessible GPT fragment}\label{accessibleGPTdefn}
\\
We call the tuple $G_{PM}:=(\Omega_P, \mathcal{E}_M, {B}_{PM},u_M)$ (whose elements are defined above) the \emph{accessible GPT fragment} associated to the multisource-multimeter pair $(P,M)$.
\end{definition}

The second key formal distinction between standard GPTs and accessible GPT fragments is that the latter may contain subnormalized states \emph{whose normalized counterparts are not contained in the accessible GPT fragment}. This happens every time a given state appears \emph{only} in the context of multisources with more than one outcome arising with nonzero probability. In a standard GPT, by contrast, the normalized version of every subnormalized GPT state is also a physically valid state (and hence included in the GPT).  This is because for 
every subnormalized state in one's GPT state space, it is physically possible to generate a normalized version of that state, e.g., by repeat-until-success preparation procedures~\cite{chiribella2010probabilistic}.\footnote{The fact that there is no such repeat-until success procedure for measurements leads to an interesting asymmetry for standard GPTs, namely, that satisfaction of the no-restriction hypothesis for effects implies satisfaction of the no-restriction hypothesis for states, while the converse is not true. For accessible GPT fragments, the implication does not hold in either direction.} 
Such procedures are not possible within a multisource-multimeter scenario, which is why accessible GPT fragments may exclude the normalized counterparts of some subnormalized states in the fragment. One can see this in the example we depict in Fig.~\ref{fig:tofragment}.

\subsection{The notion of classicality for accessible GPT fragments}
As discussed in Section~\ref{se:classicGPT}, a GPT is 
 classically-explainable 
if and only if it is simplex-embeddable. Similarly, an accessible GPT fragment is 
 classically-explainable 
 if and only if it is simplex-embeddable. In the latter case, however, the notion of simplex-embeddability must be adapted from that given in Ref.~\cite{SchmidGPT} (or above, in Definition~\ref{simplexembeddability}) for standard GPTs to the appropriate notion for accessible GPT fragments. In particular, it must be defined for subnormalized state spaces
 (rather than for normalized state spaces, as in Ref.~\cite{SchmidGPT}). 

Simplex-embeddability of accessible GPT fragments is defined as follows.
\begin{definition} \textbf{Simplex-embeddability  of accessible GPT fragments.} \label{simplexembeddingfrag}
Consider a given multisource-multimeter pair $(P,M)$ and the associated accessible GPT fragment  $G_{PM}=(\Omega_P,\mathcal{E}_M,{B}_{PM},u_M)$.
$G_{PM}$ is said to be simplex-embeddable if there exists some $n$ and a pair of linear maps $\iota:S_P\to \mathds{R}^n$ and $\kappa:{S_M}^*\to {\mathds{R}^n}^*$ such that for all $s\in \Omega_P$ and $e\in \mathcal{E}_M$, one has
\begin{align}
\iota(s) &\in \Delta_{\text{sub}}^n, \\
\kappa(e) &\in {{\Delta^n_{\text{sub}}}}^*, \\
\kappa(e)[\iota(s)]  &= {B}_{PM}(e,s),\\
\kappa(u_M) &= \mathbf{1}_n,
\end{align}
where $\Delta^n_{\text{sub}}$ is the  simplex consisting of the convex hull of the zero vector and the standard basis vectors (those of the form $(1,0,\ldots,0), (0,1,\ldots, 0)$ etc.), ${\Delta^n_{\text{sub}}}^*$ is the dual convex set, and $\mathbf{1}_n$ is the vector $(1,1,...,1)$ of $n$ 1's.
\end{definition}
Note that the dimension $n$ of the  simplex 
may be larger than the dimension of the given accessible GPT fragment.

Note that if a GPT $G_S$ is simplex-embeddable in the sense of Definition~\ref{simplexembeddability}, 
then any multisource-multimeter pair $(P,M)$ in $G_S$
 defines an accessible GPT fragment $G_{PM}$ that is simplex-embeddable in the sense of Definition~\ref{simplexembeddingfrag}. It follows that if a given multisource-multimeter pair $(P,M)$ in $G_S$  is not classically-explainable, then one can infer that the GPT $G_S$ fails to be classically-explainable.

\section{Cone-equivalent accessible GPT fragments }

We now introduce a kind of relationship between accessible GPT fragments (or between standard GPTs), which we term cone equivalence. Our definition draws inspiration from the idea of \emph{noisy unrestricted GPTs} from Ref.~\cite{wright2020general},  and we recently were made aware that the analogous concept of projective equivalence has been used, for example, in the proofs of Ref.~\cite{aubrun2021entangleability};  to our knowledge, however, cone equivalence has not previously been studied in its own right. After defining the notion, we will give two distinct experimental scenarios in which cone-equivalent accessible GPT fragments arise. (The cone equivalence in these two examples will then be central to proving our results about witnessing nonclassicality.)

\begin{definition}\textbf{Cone-equivalent accessible GPT fragments.}\label{def:CE}\\
A pair of accessible GPT fragments with the same probability rule and unit, $G=(\Omega,\mathcal{E},{B},u)$ and $G'=(\Omega',\mathcal{E}',{B},u)$, are said to be cone equivalent if and only if 
\begin{align}
    \mathsf{Cone}[\Omega] &= \mathsf{Cone}[\Omega']\label{eq:states_cone_eq}\\
    \mathsf{Cone}[\mathcal{E}] &= \mathsf{Cone}[\mathcal{E}'],
\end{align}
where the \emph{cone} associated to some set of vectors $X$ is defined as
\beq
\mathsf{Cone}[X] := \{ \lambda x | x\in X, \lambda \in \mathds{R}^+ \}.
\eeq
\end{definition}

Figure~\ref{cegpts} shows a simple example of two cone-equivalent accessible GPT fragments. 
\begin{figure*}[htb!]
\[%
\InputIfFileExists{Diagrams/Fig2New.tikz}{}{\input{./figures/Diagrams/Fig2New.tikz}}\]
\caption{
Here we depict two distinct three-dimensional accessible GPT fragment state spaces which define identical cones. (a) The first is the inner polytope in dark violet-grey constructed in Fig.~\ref{fig:tofragment}; (b) the second is the polytope in orange, which happens to contain the dark violet-grey one.(c) The \emph{cones} defined by these two state spaces coincide. The extremal rays of the cone common to both polytopes are depicted by the protruding red arrows.  As in Fig.~\ref{fig:tofragment}, the grey region demarcates the  convex set of \emph{normalized} states for the full GPT.  The black triangle lies within this region, as it corresponds to the set of normalized states in the inner polytope.
}
\label{cegpts}
\end{figure*}

In the next two subsections, we will give two physically-motivated scenarios in which cone-equivalent accessible GPT fragments arise naturally. This fact is then central to all of the results that we derive regarding noncontextuality. 

\begin{rem}\label{rem:EquivalentGPT}
One can also define a broader notion of cone equivalence which, in particular, does not limit the scope to accessible GPT fragments with the same probability rule and unit. 
This is done by leveraging the natural notion of equivalence for accessible GPT fragments. (As an example, $(\Omega,\mathcal{E},{B},u)$ is equivalent to $(T(\Omega),T'(\mathcal{E}),{B}\circ {T'}^{-1}\times T^{-1}, T'(u))$ for any reversible linear maps $T$ and $T'$.) That is, one can define a pair of accessible GPT fragments to be cone-equivalent if the pair is \emph{equivalent to a cone-equivalent pair} as per Def.~\ref{def:CE}. Our main technical result (Theorem~\ref{maintechnicalres}) holds under this more permissive notion of cone equivalence. However, this level of generality is unnecessary for any of our examples or key corollaries, and hence we omit it for simplicity.
\end{rem}

As discussed in the introduction, the notion of cone-equivalence can be used to shed light on the results of Ref.~\cite{wright2020general}. Another example of this is the following.
\begin{rem}\label{remarkonWrightetal}
As a simple extension of arguments in Ref.~\cite{wright2020general}, one can show that all cone-equivalent accessible GPT fragments have the same space of \emph{logically possible} states and have the same space of \emph{logically possible} effects.
 This fact disproves a common belief that whenever one shrinks the effect space of a given GPT, the  space of logically possible states 
  will necessarily grow. This is because if one shrinks the effect space by  merely rescaling some of the effects, the cone defined by the effect space is unchanged,  and the space of logically possible states depends only on this cone.
\end{rem}

\subsection{Example 1: flag-convexified scenarios} \label{firstexfc}

Our first example of cone-equivalent accessible GPT fragments arises from a process that we call \emph{flag-convexification}~\cite{selby2021incompatibility,singh2021revealing}. Suppose that we have a prepare and measure scenario described by a multisource and multimeter as in Eq.~\eqref{eq:PMScenario}. Now consider a new scenario in which one does not directly choose the setting variables, but rather allows each setting's value to be selected probabilistically according to some fixed, full-support distribution $\mu$ (for the multisource) or $\nu$ (for the multimeter). These random variables are moreover each copied to some flag variable which is treated as an additional output to the multisource (or multimeter).

More formally, flag-convexification of the multisource and multimeter 
gives a new scenario with a single source and a single meter (the white dot represents a copy operation),
\begin{align}
\InputIfFileExists{Diagrams/fcSource.tikz}{}{\input{./figures/Diagrams/fcSource.tikz}} &:= %
\InputIfFileExists{Diagrams/flagConvNonUniP.tikz}{}{\input{./figures/Diagrams/flagConvNonUniP.tikz}} , \text{ and}\\ %
\InputIfFileExists{Diagrams/fcMeter.tikz}{}{\input{./figures/Diagrams/fcMeter.tikz}} &:=%
\InputIfFileExists{Diagrams/flagConvNonUniM.tikz}{}{\input{./figures/Diagrams/flagConvNonUniM.tikz}},
\end{align}
to obtain the flag-convexified source-meter pair:
\beq
\left( %
\InputIfFileExists{Diagrams/fcSource.tikz}{}{\input{./figures/Diagrams/fcSource.tikz}} ,  %
\InputIfFileExists{Diagrams/fcMeter.tikz}{}{\input{./figures/Diagrams/fcMeter.tikz}}\right).
\eeq
 
The accessible GPT fragment $G_{P^\mu M^\nu}=(\Omega_{P^\mu}, \mathcal{E}_{M^\nu}, {B}_{P^\mu M^\nu}, u_{M^\nu})$ realized by this flag-convexified source-meter pair is defined by 
\begin{align}\label{eq:fcGPT}
\Omega_{P^\mu} &:= \left\{%
\begin{tikzpicture}
	\begin{pgfonlayer}{nodelayer}
		\node [style=none] (5) at (0, 0.5) {};
		\node [style=point] (6) at (0, -0.75) {$s$};
		\node [style=right label] (14) at (0, 0.25) {$S_{P^\mu}$};
	\end{pgfonlayer}
	\begin{pgfonlayer}{edgelayer}
		\draw [qWire] (6) to (5.center);
	\end{pgfonlayer}
\end{tikzpicture}
} =
%
\InputIfFileExists{Diagrams/fcnuOmegacn.tikz}{}{\input{./figures/Diagrams/fcnuOmegacn.tikz}} \middle|  c' \in \mathsf{SubStochEff}  \right\},  \\
\mathcal{E}_{M^\nu} &:=  \left\{%
\begin{tikzpicture}
	\begin{pgfonlayer}{nodelayer}
		\node [style=none] (5) at (0, -0.75) {};
		\node [style=copoint] (6) at (0, 0.5) {$e$};
		\node [style=right label] (14) at (0, -0.5) {$S_{M^\nu}$};
	\end{pgfonlayer}
	\begin{pgfonlayer}{edgelayer}
		\draw [qWire] (6) to (5.center);
	\end{pgfonlayer}
\end{tikzpicture}
} =
%
\InputIfFileExists{Diagrams/fcnucalEcn.tikz}{}{\input{./figures/Diagrams/fcnucalEcn.tikz}}\middle|  d' \in \mathsf{SubStochEff}  \right\},  \\
{B}_{P^\mu M^\nu} &: {S_{M^\nu}}^*\times S_{P^\mu} \to \mathds{R}, \nonumber \\
  &::\left(%
} , %
}\right) \mapsto %
\InputIfFileExists{Diagrams/probProjc.tikz}{}{\input{./figures/Diagrams/probProjc.tikz}}, \\
  &{\rm and }\nonumber \\ \label{unitfcgptex}
  u_{M^\nu} &= %
\begin{tikzpicture}
	\begin{pgfonlayer}{nodelayer}
		\node [style=none] (5) at (-0.5, 0) {};
		\node [style=none] (6) at (-0.5, -1.5) {};
		\node [style=right label] (14) at (-0.5, -1.25) {$S_{M^\nu}$};
		\node [style=upground] (15) at (-0.5, 0.25) {};
	\end{pgfonlayer}
	\begin{pgfonlayer}{edgelayer}
		\draw [qWire] (6.center) to (5.center);
	\end{pgfonlayer}
\end{tikzpicture}
} := \ \ %
\InputIfFileExists{Diagrams/GPTUnit1c.tikz}{}{\input{./figures/Diagrams/GPTUnit1c.tikz}}.
\end{align}

There is then a close relationship between the accessible GPT fragment $G_{PM}= (\Omega_P, \mathcal{E}_M, {B}_{PM}, u_M)$ characterizing the original scenario 
and the accessible GPT fragment $G_{P^\mu M^\nu}:=(\Omega_{P^\mu}, \mathcal{E}_{M^\nu}, {B}_{P^\mu M^\nu}, u_{M^\nu})$ characterizing any flag-convexification of this scenario.
\begin{prop}\label{prop:FCisCE}
The accessible GPT fragment $G_{P^\mu M^\nu}$ realized by a flag-convexified source-meter pair is cone-equivalent to the accessible GPT fragment $G_{PM}$ realized by the original multisource-multimeter.
\end{prop}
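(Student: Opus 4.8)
The plan is to establish the two cone equalities in Definition~\ref{def:CE} separately, since the probability rule and unit match essentially by construction (the rule ${B}_{P^\mu M^\nu}$ is just ${B}_{PM}$ precomposed with the inclusion maps, and $u_{M^\nu}$ is $u_M$ transported through the relevant projection, both of which one checks using Eq.~\eqref{eq:causeq} together with the full-support property of $\mu$ and $\nu$). The substantive content is entirely in showing $\mathsf{Cone}[\Omega_P] = \mathsf{Cone}[\Omega_{P^\mu}]$ and $\mathsf{Cone}[\mathcal{E}_M] = \mathsf{Cone}[\mathcal{E}_{M^\nu}]$; and by the symmetry of the construction (states vs.\ effects, multisource vs.\ multimeter) it suffices to do one of them in detail, say the state-space equality, and remark that the effect case is dual.

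For the inclusion $\mathsf{Cone}[\Omega_{P^\mu}] \subseteq \mathsf{Cone}[\Omega_P]$: a generic element of $\Omega_{P^\mu}$ is obtained by feeding the flag-convexified source $P^\mu$ into a substochastic effect $c' \in \mathsf{SubStochEff}$ acting on the joint classical output (the original outcome $A$ together with the flag recording the setting value drawn from $\mu$). Because $P^\mu$ is literally $P$ with its setting wire fed by $\mu$ and copied out, composing with $c'$ amounts to composing $P$ with some substochastic comb $c \in \mathsf{SubStochComb}$ — namely the one that prepares the setting $X$ according to $\mu$ and then responds via $c'$ on $(A,\text{flag})$. Hence every vector in $\Omega_{P^\mu}$ already lies in $\Omega_P$, giving the inclusion without even needing to rescale.

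For the reverse inclusion $\mathsf{Cone}[\Omega_P] \subseteq \mathsf{Cone}[\Omega_{P^\mu}]$: here one takes an arbitrary $c \in \mathsf{SubStochComb}$, decomposed as in Eq.~\eqref{eq:crc} into a subnormalized bipartite distribution $q$ over $(X,Z)$ and a response $r$ over $(A,Z)$, and must realize the resulting state vector — up to a positive scalar — using $P^\mu$ and some $c'\in\mathsf{SubStochEff}$. The key move is that, since $\mu$ has full support, for any desired weighting of the settings one can reproduce it by postselecting on the flag: i.e.\ one builds $c'$ so that, conditioned on the flag reading $x$, it applies a suitably reweighted version of $r(\cdot\,|\,x,\cdot)$, absorbing the mismatch between $\mu(x)$ and $q$'s marginal into the substochastic slack; summing over $x$ recovers $c$ up to an overall positive constant (the constant coming from the fact that $c'$ must remain substochastic, so one may need to scale everything down). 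This produces a vector in $\mathsf{Cone}[\Omega_{P^\mu}]$ proportional to the given vector in $\Omega_P$, closing the inclusion.

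I expect the main obstacle to be purely bookkeeping: carefully writing the comb decomposition~\eqref{eq:crc}, tracking how the copy operation (white dot) on the flag interacts with the postselection effect $c'$, and verifying that the reweighted response function one constructs is genuinely substochastic (so that it lies in $\mathsf{SubStochEff}$) — which is exactly where the positive rescaling inherent to cone equivalence, rather than plain equivalence, becomes necessary. Once the state-space equality is in hand, the effect-space equality follows by the same argument with $\nu$ in place of $\mu$ and substochastic combs on the measurement side, and then Definition~\ref{def:CE} is satisfied, completing the proof.
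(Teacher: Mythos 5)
Your proof is correct, and it gets to the same essential point as the paper---that full support of $\mu$ (and $\nu$) makes every accessible flag-convexified state/effect a strictly positive rescaling of an accessible original one---but it is organized differently. The paper first proves a span lemma, $S_{P^\mu}=S_P$ and $S_{M^\nu}=S_M$, by decomposing an arbitrary comb into ``atomic'' combs $\delta_{a}\delta_{x}$ (Eqs.~\eqref{decompceffect}--\eqref{decompceffectfc}); this is what licenses saying the two fragments literally have \emph{the same} probability rule and unit, as Definition~\ref{def:CE} demands, since the inclusion and projection maps then coincide. It then gets the cone equality in a single chain of equalities between cones generated by atomic states, using $\mu(x)\neq 0$. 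You instead prove the two cone inclusions separately: one direction by observing that $\mu$, the copy, and $c'$ assemble into a substochastic comb $c$ (this is exactly the subfragment statement of Appendix~\ref{subfrag}, which the paper keeps separate because it is not needed for cone equivalence), and the reverse by the explicit reweighting $c'_{ax}\propto c_{ax}/\mu(x)$ with an overall positive scale chosen to keep $c'$ in $[0,1]$---a construction that works precisely because any $[0,1]$-valued function of $(a,x)$ lies in $\mathsf{SubStochEff}$ and $c_{ax}\le 1$. The one place you are too quick is the opening remark that the probability rule and unit match ``by construction'' via Eq.~\eqref{eq:causeq}: a priori $B_{P^\mu M^\nu}$ and $B_{PM}$ live on different subspaces, and identifying them requires the span equalities; in your scheme these do follow, since $\mathsf{Span}[X]=\mathsf{Span}[\mathsf{Cone}[X]]$ and you prove the cones coincide, but that dependence should be stated rather than attributed to the causality equation. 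With that ordering made explicit, your double-inclusion argument is a perfectly serviceable alternative to the paper's generator-based computation; the paper's route buys a slightly cleaner bookkeeping (no need to verify substochasticity of a reweighted comb), while yours makes the subfragment relation and the role of the rescaling constant more transparent.
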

\proof This result is formally proved in Appendix~\ref{provingFCmeansCE}.
\endproof
This result can be seen quite intuitively by viewing the procedure of flag-convexification as a form of classical post-processing, after which the states and effects in the flag-convexified accessible GPT fragment are proportional (as vectors) to those in the original accessible GPT fragment, but scaled down by some strictly positive real numbers. 

We also show (in Appendix~\ref{subfrag}) that the flag-convexified fragment is a \emph{subfragment} of the original fragment, by which we mean that $\Omega_{P^\mu} \subseteq \Omega_P$ and that $\mathcal{E}_{M^\nu} \subseteq \mathcal{E}_{M}$. This fact, however, is not relevant to any of our following results.

\subsection{Example 2: detector inefficiencies}\label{subsec:Inefficiencies}

Consider a multisource-multimeter scenario which is described by some accessible GPT fragment $G$. Contrast this with a new scenario in which the multisource is unchanged, but the multimeter is realized using inefficient detectors. In this case, the accessible GPT fragment describing the scenario with inefficient detectors is cone-equivalent to the original accessible GPT fragment.

To see this formally, we first model detector inefficiency as a particular kind of postprocessing of a multimeter, which maps a multimeter with outcome set $B$ to one with outcome set $B' := B \cup \{*\}$, where the additional outcome is called the \emph{null outcome}.
For each measurement $y \in Y$ in the multimeter, an outcome $b \in B$ for that measurement has probability $\alpha_{by}$ to get flipped to the null outcome, and probability $1-\alpha_{by}$ to remain unchanged. We let the probabilities $1-\alpha_{by}$ be arbitrary weights in $(0,1]$, where we have excluded the case of zero efficiency. Notice that an effect appearing in two distinct measurements $y$ and $y'$ may therefore have different weights. Notice also that the  probability of obtaining the null outcome -- the detectors not firing -- can depend on the state of the system.
In this sense, we have a maximally general model of detector inefficiencies. 

More explicitly, consider a multimeter $M$ with setting $Y$ and outcome $B$. An inefficient version of this detector can always be described as a classical postprocessing  of $M$, denoted $E^\alpha$,
which results in a new multimeter $M^\alpha$,
\beq \label{eq:IneffPP}
\InputIfFileExists{Diagrams/inefficientDetector1.tikz}{}{\input{./figures/Diagrams/inefficientDetector1.tikz}} \ \:= \ \ %
\InputIfFileExists{Diagrams/inefficientDetector.tikz}{}{\input{./figures/Diagrams/inefficientDetector.tikz}}
\eeq
with setting $Y$ and outcome set $B'=B\cup \{*\}$,  where $*$ is the null outcome.
In particular, the postprocessing $E^\alpha$ is defined by
\beq \label{Ealpha}
\InputIfFileExists{Diagrams/inefficientDetector2.tikz}{}{\input{./figures/Diagrams/inefficientDetector2.tikz}}  =  \sum_{\begin{minipage}{.8cm}\scriptsize $b \in B$,\\ $y \in Y$ \end{minipage}} %
\InputIfFileExists{Diagrams/inefficientDetector3.tikz}{}{\input{./figures/Diagrams/inefficientDetector3.tikz}} \left(
\raisebox{10mm}
{$\! \alpha_{by} %
\begin{tikzpicture}
	\begin{pgfonlayer}{nodelayer}
		\node [style=point] (0) at (0, -0.25) {$*$};
		\node [style=none] (1) at (0, 0.75) {};
		\node [style=right label] (2) at (0, 0.5) {$B'$};
	\end{pgfonlayer}
	\begin{pgfonlayer}{edgelayer}
		\draw [cWire] (1.center) to (0);
	\end{pgfonlayer}
\end{tikzpicture}
}\!\!\!+\!(1-\alpha_{by}) %
\begin{tikzpicture}
	\begin{pgfonlayer}{nodelayer}
		\node [style=point] (0) at (0, -0.25) {$b$};
		\node [style=none] (1) at (0, 0.75) {};
		\node [style=right label] (2) at (0, 0.5) {$B'$};
	\end{pgfonlayer}
	\begin{pgfonlayer}{edgelayer}
		\draw [cWire] (1.center) to (0);
	\end{pgfonlayer}
\end{tikzpicture}
}\!$}\right)\!,
\eeq
for some $\alpha_{by} \in [0,1)$.

It follows that:
\begin{prop}
Consider a multisource-multimeter scenario $(P,M)$ and the accessible GPT fragment $G$ associated with it. The accessible GPT fragment $G^\alpha$ which arises if the multimeter is implemented in an inefficient manner from the scenario $(P,M^\alpha)$ is  cone-equivalent to the original accessible GPT fragment.
\end{prop}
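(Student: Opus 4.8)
The plan is to reduce the claim to a statement about the effect cones alone. Passing from the scenario $(P,M)$ to $(P,M^\alpha)$ leaves the multisource $P$ untouched, so the accessible state space $\Omega_P$, its ambient vector space $S_P$, and indeed everything about the state side of the fragment are literally unchanged; hence the state cone condition of Eq.~\eqref{eq:states_cone_eq} holds trivially. Modulo a small bookkeeping check that the two fragments genuinely share the same ambient effect space, probability rule, and unit — so that Definition~\ref{def:CE} applies directly rather than through the more permissive variant of Remark~\ref{rem:EquivalentGPT} — the entire content of the proposition is the equality $\mathsf{Cone}[\mathcal{E}_{M^\alpha}]=\mathsf{Cone}[\mathcal{E}_M]$.

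The key structural input is that $M^\alpha=E^\alpha\circ M$ is a classical postprocessing of $M$, with $E^\alpha$ given explicitly by Eq.~\eqref{Ealpha}. Reading the atomic effects of $M^\alpha$ off that equation: for each setting $y$, the non-null outcome $b\in B$ carries atomic effect $(1-\alpha_{by})M_{b|y}$, and the null outcome carries $\sum_{b\in B}\alpha_{by}M_{b|y}$, where $M_{b|y}$ denotes the atomic effect of the original multimeter for outcome $b$ of setting $y$. Next I would use the fact — which follows immediately from the decomposition of substochastic combs in Eq.~\eqref{eq:crc} — that every effect in $\mathcal{E}_M$ (resp.\ $\mathcal{E}_{M^\alpha}$) is a nonnegative linear combination of the atomic effects of $M$ (resp.\ $M^\alpha$), so that $\mathcal{E}_M$ and $\mathcal{E}_{M^\alpha}$ are convex sets whose cones $\mathsf{Cone}[\cdot]$ are therefore convex cones. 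Since the atomic effects of $M^\alpha$ are themselves nonnegative combinations of the $M_{b|y}$, and each $M_{b|y}$ lies in $\mathcal{E}_M$, every effect of $M^\alpha$ is a nonnegative combination of the $M_{b|y}\in\mathcal{E}_M$, giving $\mathsf{Cone}[\mathcal{E}_{M^\alpha}]\subseteq\mathsf{Cone}[\mathcal{E}_M]$. For the reverse inclusion I would invoke that the efficiencies are strictly positive, $1-\alpha_{by}>0$: then $(1-\alpha_{by})M_{b|y}\in\mathcal{E}_{M^\alpha}$ (it is an accessible effect of $M^\alpha$ — fix setting $y$, read off the non-null outcome $b$), so $M_{b|y}=(1-\alpha_{by})^{-1}(1-\alpha_{by})M_{b|y}\in\mathsf{Cone}[\mathcal{E}_{M^\alpha}]$ for every $b,y$; as $\mathsf{Cone}[\mathcal{E}_{M^\alpha}]$ is a convex cone containing every $M_{b|y}$, it contains every nonnegative combination of them, hence contains $\mathcal{E}_M$, so $\mathsf{Cone}[\mathcal{E}_M]\subseteq\mathsf{Cone}[\mathcal{E}_{M^\alpha}]$. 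The two inclusions yield the desired equality.

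Finally, for the bookkeeping: $S_M^*=\mathsf{Span}[\mathcal{E}_M]$ and $S_{M^\alpha}^*=\mathsf{Span}[\mathcal{E}_{M^\alpha}]$ both coincide with $\mathsf{Span}[\{M_{b|y}\}]$ — one inclusion from $\mathcal{E}_{M^\alpha}\subseteq\mathsf{Cone}[\mathcal{E}_M]\subseteq\mathsf{Span}[\{M_{b|y}\}]$, the reverse from $(1-\alpha_{by})M_{b|y}\in\mathcal{E}_{M^\alpha}$ with $1-\alpha_{by}\ne 0$ — so the two fragments live on the same ambient effect space, the contravariant projection and inclusion maps relating it to $S^*$ are the same, the probability rule of Eq.~\eqref{eq:GPTProb} is the same bilinear map for both, and the unit agrees since $u_{M^\alpha}$ is by definition the projection of $u_S$ into this common subspace, exactly as is $u_M$ (consistently, coarse-graining over all outcomes of $M^\alpha$ for any fixed setting returns $u_S$ by Eq.~\eqref{eq:causeq}, just as for $M$). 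The only genuine subtlety in the whole argument is this identification of ambient spaces together with the observation that accessible effect spaces are precisely the convex cones generated by their atomic effects; the strict positivity $1-\alpha_{by}>0$ — i.e.\ nonzero efficiency — is exactly what makes the cone inclusion go in both directions, which is why the zero-efficiency case must be excluded. Conceptually this is the same mechanism as in Proposition~\ref{prop:FCisCE}: the new effects are positive rescalings of the old ones, so the cone they generate is unchanged.
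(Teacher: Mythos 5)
Your proof is correct and is essentially the paper's own argument: the paper states that the result "can be formally established by a proof analogous to the proof in Appendix~\ref{provingFCmeansCE}", and what you have written is precisely that analogue — equality of spans (hence identical ambient effect space, projection/inclusion maps, probability rule and unit), plus equality of the effect cones via decomposition into atomic effects $M_{b|y}$, with the strict positivity $1-\alpha_{by}>0$ supplying the reverse inclusion, while the state side is untouched. No gaps; this matches the paper's intended route, just written out explicitly.
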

 This can be formally established by a proof analogous to the proof in Appendix~\ref{provingFCmeansCE}. 
Intuitively, the result follows from the fact that for every effect in the original  measurement, the corresponding inefficient  version of the measurement contains an effect that is proportional to it, but rescaled by a value in $(0,1]$.

It is worth explaining the distinction between  processings of a measurement that add inefficiency and those that add noise.
 A standard model for 
 adding noise to a measurement is to imagine that with some probability, the outcome that is returned is the one obtained in the original measurement, and otherwise the outcome is sampled uniformly at random.
 More generally, this probability can depend on the outcome as well as on the measurement. 
Formally, this can be described as a postprocessing of a given multimeter as in Eq.~\eqref{eq:IneffPP}, but where the postprocessing is given not by Eq.~\eqref{Ealpha}, but rather by
\beq
\InputIfFileExists{Diagrams/noisyDetector2.tikz}{}{\input{./figures/Diagrams/noisyDetector2.tikz}}  =  \sum_{\begin{minipage}{.8cm}\scriptsize $b \in B$,\\ $y \in Y$ \end{minipage}} %
\InputIfFileExists{Diagrams/inefficientDetector3.tikz}{}{\input{./figures/Diagrams/inefficientDetector3.tikz}} \left(
\raisebox{10mm}
{$\! \beta_{by} %
\begin{tikzpicture}
	\begin{pgfonlayer}{nodelayer}
		\node [style=point] (0) at (0, -0.25) {$u$};
		\node [style=none] (1) at (0, 0.75) {};
		\node [style=right label] (2) at (0, 0.5) {$B$};
	\end{pgfonlayer}
	\begin{pgfonlayer}{edgelayer}
		\draw [cWire] (1.center) to (0);
	\end{pgfonlayer}
\end{tikzpicture}
}\!\!+\!(1-\beta_{by}) %
\begin{tikzpicture}
	\begin{pgfonlayer}{nodelayer}
		\node [style=point] (0) at (0, -0.25) {$b$};
		\node [style=none] (1) at (0, 0.75) {};
		\node [style=right label] (2) at (0, 0.5) {$B$};
	\end{pgfonlayer}
	\begin{pgfonlayer}{edgelayer}
		\draw [cWire] (1.center) to (0);
	\end{pgfonlayer}
\end{tikzpicture}
}\!$}\right)\!,
\eeq
for some $\beta_{by} \in (0,1]$, where $u$ is the uniform distribution over the outcomes $B$. 
 
In an \emph{inefficient} implementation of a measurement $M$, the GPT effects for each non-null outcome are proportional to the GPT effects for the corresponding outcome of $M$. In a \emph{noisy} implementation of a measurement $M$, this proportionality does not hold for one or more effects. 

We depict a given effect space together with an inefficient and a noisy implementation thereof, to illustrate the difference between the two.
\begin{figure}[htb!]
\centering
\beq
\InputIfFileExists{Diagrams/Rotate1.tikz}{}{\input{./figures/Diagrams/Rotate1.tikz}}
\eeq
\caption{ (a) The effect space arising if the measurement whose effect space is shown in (b) is implemented inefficiently. (b) The effect space corresponding to a given  measurement. (c) The effect space arising if the  measurement from (b) is implemented noisily. The cones defined by each effect space are  are the regions whose boundaries are denoted by black arrows. Clearly, the effect spaces in (a) and (b) define the same cone, which is distinct from the one defined by the effect space in (c).
}
 \label{ngpts}
\end{figure}

\subsection{Classicality of cone-equivalent GPTs}

We now turn to the main technical result which underlies the conceptual results of our paper.
\begin{theorem}\label{maintechnicalres}
 If the accessible GPT fragments $G$ and $G'$ are cone-equivalent, then
$G'$ is simplex-embeddable if and only if $G$ is simplex-embeddable. 
\end{theorem}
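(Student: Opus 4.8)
The plan is to prove only the implication ``$G$ simplex-embeddable $\Rightarrow$ $G'$ simplex-embeddable'', since the hypothesis of cone equivalence is symmetric in $G$ and $G'$. Write $G=(\Omega,\mathcal{E},B,u)$ and $G'=(\Omega',\mathcal{E}',B,u)$. First I would record that equality of the cones forces equality of the ambient vector spaces: $\mathsf{Span}[\Omega']=\mathsf{Span}[\mathsf{Cone}[\Omega']]=\mathsf{Span}[\mathsf{Cone}[\Omega]]=\mathsf{Span}[\Omega]$, and dually for the effects, so $G$ and $G'$ share the same state space $V:=S_P=S_{P'}$ and the same effect space $V^*:=S_M^*=S_{M'}^*$, on which the common bilinear probability rule $B$ is defined (this is what makes ``same probability rule'' in Def.~\ref{def:CE} well posed). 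The conceptual observation driving the whole argument is that the membership conditions of Definition~\ref{simplexembeddingfrag} each split into a ``cone part'' and a ``normalization part'': one has $\Delta^n_{\text{sub}}=\mathds{R}^n_{\geq 0}\cap\{x:\sum_i x_i\le 1\}$ and ${\Delta^n_{\text{sub}}}^*=[0,1]^n=\mathds{R}^n_{\geq 0}\cap\{f:f_i\le 1\ \forall i\}$; the orthant parts are manifestly conditions on cones, and I will show the normalization parts are automatic.

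Next I would take any simplex embedding $(n,\iota,\kappa)$ of $G$ and claim that the very same triple simplex-embeds $G'$ (no change of $n$ is needed). For the cone parts: since $\iota$ maps $\Omega$ into $\Delta^n_{\text{sub}}\subseteq\mathds{R}^n_{\geq 0}$ and the orthant is a cone, $\iota$ maps $\mathsf{Cone}[\Omega]$ into $\mathds{R}^n_{\geq 0}$; as $\Omega'\subseteq\mathsf{Cone}[\Omega']=\mathsf{Cone}[\Omega]$, this gives $\iota(\Omega')\subseteq\mathds{R}^n_{\geq 0}$. The identical argument with $\kappa$, $\mathcal{E}$, $\mathcal{E}'$ gives $\kappa(\mathcal{E}')\subseteq\mathds{R}^n_{\geq 0}$.

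It remains to handle the normalization parts together with the probability-rule and unit conditions. For a state $s'\in\Omega'$, compute $\sum_i \iota(s')_i = \mathbf{1}_n[\iota(s')] = \kappa(u)[\iota(s')] = B(u,s')$, using $\kappa(u)=\mathbf{1}_n$ (inherited unchanged by $G'$); since every state of an accessible GPT fragment is subnormalized, $B(u,s')\in[0,1]$, and hence $\iota(s')\in\Delta^n_{\text{sub}}$. For an effect $e'\in\mathcal{E}'$, invoke the complementation property~\eqref{eq:MeasCompletion} (Appendix~\ref{proofcomplem}) to obtain $e'^\perp\in\mathcal{E}'$ with $e'+e'^\perp=u$; applying the linear map $\kappa$ gives $\kappa(e')=\mathbf{1}_n-\kappa(e'^\perp)$, and since $e'^\perp\in\mathsf{Cone}[\mathcal{E}']=\mathsf{Cone}[\mathcal{E}]$ the cone argument above yields $\kappa(e'^\perp)\ge 0$ coordinatewise, whence $\kappa(e')\le\mathbf{1}_n$ coordinatewise and $\kappa(e')\in{\Delta^n_{\text{sub}}}^*$. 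Finally, for the probability rule, the maps $(e,s)\mapsto\kappa(e)[\iota(s)]$ and $(e,s)\mapsto B(e,s)$ are both bilinear on $V^*\times V$ and agree on $\mathcal{E}\times\Omega$; since $\mathcal{E}$ spans $V^*$ and $\Omega$ spans $V$, they agree on all of $V^*\times V$, so in particular $\kappa(e')[\iota(s')]=B(e',s')$ for all $s'\in\Omega'$, $e'\in\mathcal{E}'$. This exhibits $(n,\iota,\kappa)$ as a simplex embedding of $G'$, and by the symmetry noted at the outset the theorem follows.

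I do not anticipate a serious obstacle: the real content is the decomposition in the first paragraph, namely that the only genuinely geometric constraints in simplex embeddability are the two orthant (cone) inclusions together with $\kappa(u)=\mathbf{1}_n$, and these depend on the fragment only through $\mathsf{Cone}[\Omega]$, $\mathsf{Cone}[\mathcal{E}]$, $B$ and $u$, all preserved by cone equivalence. The one step that genuinely uses more than cone data is the normalization-on-effects check, where closure of accessible GPT fragments under effect complementation is essential; and a little care is needed in recording the span identifications $S_P=S_{P'}$, $S_M=S_{M'}$ so that reusing the same $\iota,\kappa$ makes sense. (For the more permissive notion of cone equivalence in Remark~\ref{rem:EquivalentGPT} one would first conjugate by the reversible maps $T,T'$ and then run exactly this argument.)
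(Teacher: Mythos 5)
Your proposal is correct and follows essentially the same route as the paper's proof in Appendix~\ref{app:maintechnicalres}: reuse the same embedding maps (justified by the span identifications that follow from cone equality), get the cone inclusions from linearity, recover the probability rule by bilinearity on spanning sets, obtain state subnormalization from $\kappa(u)=\mathbf{1}_n$ and $B(u,s)\le 1$, and obtain effect subnormalization from the complementation property of Eq.~\eqref{eq:MeasCompletion}. The only differences are cosmetic — you prove the reverse implication (equivalent by symmetry) and work with the explicit identifications $\mathsf{Cone}[\Delta_{\text{sub}}^n]=\mathds{R}^n_{\geq 0}$ and ${\Delta_{\text{sub}}^n}^*=[0,1]^n$ where the paper argues with the dual-cone partial order.
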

\proof See Appendix~\ref{app:maintechnicalres}.
\endproof

Hence, given any two cone-equivalent accessible GPT fragments, either both are classically explainable, or neither is.
Since we have shown that cone-equivalent accessible GPT fragments arise in a number of natural experimental scenarios, this result is a useful tool for assessing nonclassicality in these scenarios.

 Note also that Theorem~\ref{maintechnicalres} holds even with the more permissive notion of cone equivalence discussed in Remark~\ref{rem:EquivalentGPT}.\footnote{ 
Furthermore, the proof does not actually use the full geometric structure of accessible GPT fragments but only that of a `GPT fragment', see Ref.~\cite{selby2022open}
 } This follows directly from Theorem~\ref{maintechnicalres} together with the fact that if a GPT is simplex-embeddable, then so too is any equivalent GPT.
  Furthermore, in Appendix~\ref{strongerthm} we prove a slightly stronger version of this theorem using a slightly weaker notion of \emph{simplicial-cone embedding}.

\section{Applications to witnessing nonclassicality}\label{sec:Results}

The following is a simple but powerful corollary of the results we have proven so far. 
\begin{prop}\label{mainprop}
Quantum correlations that are not classically-explainable can arise in prepare-measure scenarios without measurement settings or source settings.
\end{prop}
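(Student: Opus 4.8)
The plan is to start from \emph{any} quantum prepare-measure scenario that is already known to witness the failure of generalized noncontextuality, and then apply the flag-convexification construction of Section~\ref{firstexfc} to \emph{both} the multisource and the multimeter so as to eliminate all setting variables while provably preserving nonclassicality. Concretely, first invoke the well-established fact (e.g.\ from the literature on generalized contextuality~\cite{gencontext,POM,schmid2018contextual}) that there exists a multisource-multimeter pair $(P,M)$ realizable within quantum theory whose associated accessible GPT fragment $G_{PM}$ fails to be simplex-embeddable, i.e.\ is not classically explainable in the sense of Definition~\ref{simplexembeddingfrag}. Second, form the flag-convexified source-meter pair $(P^\mu,M^\nu)$: by construction this scenario has a single source and a single meter, since the setting variables $X$ and $Y$ have been internalized as randomized choices (according to the full-support distributions $\mu$ and $\nu$) and merely recorded in the flag outputs. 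Hence the resulting scenario has neither measurement settings nor source settings. Third, note that because $P^\mu$ and $M^\nu$ are obtained from the quantum devices $P$ and $M$ by composing with classical randomization-and-copy operations, the pair $(P^\mu,M^\nu)$ is itself realizable in quantum theory, so the statistics it produces are genuine quantum correlations.

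It then remains only to check that these correlations are not classically explainable, and this is where the machinery already developed does all the work. By Proposition~\ref{prop:FCisCE}, the accessible GPT fragment $G_{P^\mu M^\nu}$ is cone-equivalent to $G_{PM}$. By Theorem~\ref{maintechnicalres}, two cone-equivalent accessible GPT fragments are simplex-embeddable if and only if each is; since $G_{PM}$ is not simplex-embeddable, neither is $G_{P^\mu M^\nu}$. Therefore the flag-convexified scenario---which has no source settings and no measurement settings---yields quantum correlations admitting no classical (simplex-embeddable) explanation, which is exactly the assertion of the proposition.

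Since every ingredient is already in place, there is no substantial obstacle: the proof is essentially a matter of combining the standard existence of a contextual quantum prepare-measure fragment with the two cited results. The only point deserving a little care is that flag-convexification must be applied on \emph{both} sides simultaneously, and that the cone-equivalence guaranteed by Proposition~\ref{prop:FCisCE} indeed covers this joint construction (it does, since the proposition is stated for the flag-convexification of the multisource-multimeter pair as a whole). A secondary, purely bookkeeping subtlety worth a sentence in the final write-up is verifying that ``no settings'' really does mean that the flag-convexified source and meter each have a unary input, which is immediate from the definitions of $P^\mu$ and $M^\nu$ in Section~\ref{firstexfc}.
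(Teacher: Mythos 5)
Your proposal is correct and follows essentially the same route as the paper's own proof: take a non-simplex-embeddable multisource--multimeter fragment, flag-convexify both devices to remove the setting variables, and combine Proposition~\ref{prop:FCisCE} with Theorem~\ref{maintechnicalres} to conclude that the resulting settings-free scenario remains non-classically-explainable. Your added observation that the flag-convexified devices remain quantum-realizable is a reasonable (and harmless) elaboration of the "quantum correlations" wording in the statement.
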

\begin{proof}
Consider a scenario with a single multisource and a single multimeter, where the accessible GPT fragment is not classically explainable  by virtue of not being simplex-embeddable. From this, define a flag-convexified scenario by flag-convexifying both the multisource and the multimeter. By Proposition~\ref{prop:FCisCE}, the resulting accessible GPT fragment is cone-equivalent to the original accessible GPT fragment.  Theorem~\ref{maintechnicalres} then states that the flag-convexified accessible GPT fragment is not simplex-embeddable.
But the flag-convexified multisource and flag-convexified multimeter have no input settings, so it follows that GPT scenarios can fail to admit of a classical explanation even in scenarios where both the measurements and the preparations have no external inputs. 
\end{proof}

Next, we recall that an operational prepare-measure  scenario admits of a generalized-noncontextual ontological model if and only if the associated accessible GPT fragment is simplex-embeddable~\cite{simplex_forthcoming}. It follows that one can find operational prepare-measure scenarios which have no measurement settings or source settings,  and yet which do not admit of any noncontextual model. One can generate explicit examples by taking any proof of contextuality involving a multisource and a multimeter (or equivalently, a set of preparations and a set of measurements), and flag-convexifying these\footnote{One can generate an even broader set of examples by considering sources and meters which are  merely \emph{operationally equivalent} to the flag-convexified multisource and multimeter. }.

We now state two immediate but conceptually nontrivial corollaries of Proposition~\ref{mainprop}.

Since measurement incompatibility is only possible if one actually \emph{has} multiple measurements, our proofs of contextuality without measurement settings implies the following.
\begin{corollary}
Proofs of the failure of generalized noncontextuality do not require any incompatible measurements.
\end{corollary}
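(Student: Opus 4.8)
The plan is to obtain this as an immediate consequence of Proposition~\ref{mainprop}, combined with the observation---already recorded just after the definition of compatibility---that any scenario containing only a single measurement is automatically one in which all measurements are compatible (take $\mathcal{M}=M$ and $\mathcal{P}=\mathds{1}$). Thus it suffices to exhibit a proof of the failure of generalized noncontextuality in a scenario with a single measurement.

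First I would invoke the existence of at least one proof of contextuality: fix any multisource-multimeter pair $(P,M)$---for instance one realizable in quantum theory---whose associated accessible GPT fragment $G_{PM}$ is not simplex-embeddable. By the result of Ref.~\cite{simplex_forthcoming} relating simplex-embeddability of accessible GPT fragments to the existence of generalized-noncontextual ontological models, the operational scenario realizing $(P,M)$ then admits of no noncontextual model. Next I would flag-convexify the multimeter (and, if one wishes, also the multisource), obtaining $(P^\mu, M^\nu)$ as in Section~\ref{firstexfc}. By Proposition~\ref{prop:FCisCE} the fragment $G_{P^\mu M^\nu}$ is cone-equivalent to $G_{PM}$, and Theorem~\ref{maintechnicalres} guarantees that cone-equivalent fragments are either both simplex-embeddable or both not; hence $G_{P^\mu M^\nu}$ is again not simplex-embeddable, so the flag-convexified scenario likewise fails to admit of a noncontextual model.

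To conclude, I would note that the flag-convexified multimeter $M^\nu$ carries no external setting variable---the setting $Y$ has been absorbed into the measurement outcome via the copy operation---so the flag-convexified scenario contains only a single measurement and hence, trivially, only compatible measurements. This is precisely a proof of the failure of generalized noncontextuality that requires no incompatible measurements.

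There is essentially no obstacle here; the corollary really is immediate once Proposition~\ref{mainprop} is in hand. The only points that merit careful statement are (i) that the bridge from ``fragment not simplex-embeddable'' to ``operational scenario has no noncontextual model'' rests on the companion result \cite{simplex_forthcoming}, and (ii) that the claim should be phrased as an existence statement---contextuality \emph{can} be witnessed without incompatible measurements---rather than as a claim that every proof of contextuality has this feature.
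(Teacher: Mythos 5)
Your proposal is correct and follows essentially the same route as the paper: the corollary is obtained as an immediate consequence of Proposition~\ref{mainprop} (whose proof, via flag-convexification, Proposition~\ref{prop:FCisCE}, and Theorem~\ref{maintechnicalres}, you have simply unpacked inline), together with the observation that a single-measurement scenario trivially contains only compatible measurements. Your two cautionary remarks---the reliance on Ref.~\cite{simplex_forthcoming} for the bridge to noncontextual models, and the phrasing as an existence claim---match the paper's own framing.
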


Additionally, since one's choice of measurements and preparations can only be correlated with hidden variables if one actually \emph{has} multiple measurements and preparations, our proofs of contextuality without measurement settings and without preparation settings implies the following. 
\begin{corollary}
Proofs of the failure of  generalized  noncontextuality need not have freedom of choice as an assumption.
\end{corollary}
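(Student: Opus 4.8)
The plan is to obtain this corollary as an essentially immediate consequence of Proposition~\ref{mainprop}, together with the characterization (invoked just above it) that an operational prepare-measure scenario admits of a generalized-noncontextual ontological model if and only if its associated accessible GPT fragment is simplex-embeddable. First I would pin down what the freedom-of-choice assumption actually asserts in a prepare-measure scenario: following the natural generalization of the Bell-scenario notion, it is the requirement that, in a putative ontological model, the setting variables $X$ (labelling which source is used) and $Y$ (labelling which measurement is performed) be statistically independent of the ontic state $\lambda$ mediating between preparation and measurement, i.e.\ that the distribution over $\lambda$ not depend on $(x,y)$. The crucial observation is that this is a constraint on how setting variables may be correlated with $\lambda$, and hence it has content only when the scenario actually contains a nontrivial choice of $X$ and/or $Y$.

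Next I would invoke Proposition~\ref{mainprop} and the discussion immediately following it: by flag-convexifying any proof of contextuality built from a multisource and a multimeter, one produces a new operational prepare-measure scenario that still fails to admit a generalized-noncontextual ontological model---because, by Proposition~\ref{prop:FCisCE} and Theorem~\ref{maintechnicalres}, the flag-convexified accessible GPT fragment is cone-equivalent to the original and hence is simplex-embeddable iff the original is---yet now has a \emph{single} source and a \emph{single} measurement, so that $X$ and $Y$ are both singleton sets. In such a scenario there is no choice of setting to be made at all, so the freedom-of-choice assumption is vacuously satisfied and simply does not appear among the premises from which the contradiction with noncontextuality is derived: that derivation uses only the observed outcome statistics (there being no $x,y$ on which to condition) together with the definition of a noncontextual ontological model for a single-source, single-measurement scenario, neither of which posits any independence between setting variables and $\lambda$.

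The step requiring the most care---and the main potential obstacle---is checking that the notion of ``generalized-noncontextual ontological model'' appropriate to a settings-free prepare-measure scenario does not itself covertly smuggle in a freedom-of-choice-type condition; that is, that once the scenario has been collapsed to a single source and a single measurement, the only assumptions in play are preparation and measurement noncontextuality applied to the operationally inequivalent but statistically equivalent procedures generated by flag-convexification, plus the standard structural assumptions of ontological models. Since with trivial $X$ and $Y$ there is no variable whose value could be correlated with $\lambda$, there is nothing for a measurement-independence hypothesis to constrain, so this check is conceptual rather than computational. Having made it, the corollary follows: the flag-convexified scenario furnished by Proposition~\ref{mainprop} constitutes a proof of the failure of generalized noncontextuality in which freedom of choice is not assumed. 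I would close by contrasting this explicitly with Bell's theorem and the Kochen-Specker theorem, where the analogous measurement-independence (``no-superdeterminism'') assumption is genuinely needed, which is precisely what makes the present corollary conceptually nontrivial.
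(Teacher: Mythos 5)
Your proposal is correct and follows the paper's own route: the corollary is obtained as an immediate consequence of Proposition~\ref{mainprop} (via flag-convexification, Proposition~\ref{prop:FCisCE}, and Theorem~\ref{maintechnicalres}), together with the observation that freedom of choice can only have content when there are nontrivial setting variables to be correlated with the ontic state, so it is vacuous in a single-source, single-measurement scenario. Your extra check that the definition of a noncontextual ontological model for the settings-free scenario does not covertly reintroduce a measurement-independence assumption is a sensible elaboration of the same one-line argument the paper gives.
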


Finally, consider a scenario wherein one is aiming to implement  some multisource-multimeter pair
 (e.g., to target particular operational equivalences and corresponding noncontextuality inequalities~\cite{mazurek2016experimental,PhysRevLett.115.110403,schmidallinequalities}), but wherein one's measurements are in fact inefficient, in the sense defined in Section~\ref{subsec:Inefficiencies}. As argued in Section~\ref{subsec:Inefficiencies}, 
 the accessible GPT fragment associated to the targeted multimeter-multisource pair and the one associated to the inefficient counterpart thereof 
  are cone-equivalent. 
Assuming that the accessible GPT fragment  of the targeted multisource-multimeter pair is not simplex-embeddable,   
Theorem~\ref{maintechnicalres} then guarantees that the  accessible GPT fragment of its inefficient counterpart  is not simplex-embeddable either, no matter how inefficient one's detectors.

It follows that:
\begin{corollary}
Proofs of the failure of generalized  noncontextuality are possible for any degree of inefficiency of detectors; therefore,  there is no detector loophole for experimental tests of generalized  noncontextuality.
\end{corollary}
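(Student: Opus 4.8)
The plan is to chain together three ingredients already in hand: the cone equivalence between a scenario and its inefficient counterpart (the Proposition of Section~\ref{subsec:Inefficiencies}), the invariance of simplex embeddability under cone equivalence (Theorem~\ref{maintechnicalres}), and the operational characterization of classicality, namely that an operational prepare-measure scenario admits a generalized-noncontextual ontological model if and only if its associated accessible GPT fragment is simplex-embeddable~\cite{simplex_forthcoming}. Concretely, I would begin from any prepare-measure scenario $(P,M)$ whose accessible GPT fragment $G$ fails to be simplex-embeddable; such scenarios exist, since any standard proof of the failure of generalized noncontextuality furnishes one via the embedding result of Ref.~\cite{simplex_forthcoming}.

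Next I would fix an arbitrary inefficiency pattern: for each measurement $y\in Y$ and outcome $b\in B$, a nonzero efficiency $1-\alpha_{by}\in(0,1]$, encoded by the postprocessing $E^\alpha$ of Eq.~\eqref{Ealpha}, producing the multimeter $M^\alpha$ and the accessible GPT fragment $G^\alpha$ of the scenario $(P,M^\alpha)$. By the Proposition of Section~\ref{subsec:Inefficiencies}, $G^\alpha$ is cone-equivalent to $G$: each non-null effect of $M^\alpha$ is a strictly positive rescaling of the corresponding effect of $M$, so the two effect cones coincide, while the state space is untouched. Theorem~\ref{maintechnicalres} then gives that $G^\alpha$ is simplex-embeddable if and only if $G$ is; since $G$ is not, neither is $G^\alpha$. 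Invoking the operational correspondence once more, the scenario $(P,M^\alpha)$ admits no generalized-noncontextual ontological model. Because the weights $\alpha_{by}$ were arbitrary in $[0,1)$, i.e.\ the efficiencies arbitrary in $(0,1]$, this holds for every nonzero degree of detector efficiency, which is exactly the claim that there is no detector loophole for tests of generalized noncontextuality.

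The main point to be careful about is not a genuine obstacle but a boundary case: the efficiency-zero situation must be excluded. If $\alpha_{by}=1$ for all $b$ associated with some detector, the corresponding effects collapse to the zero vector, the effect cone can strictly shrink, and both cone equivalence and the conclusion can fail; this is precisely why the model of Section~\ref{subsec:Inefficiencies} restricts to $1-\alpha_{by}\in(0,1]$ and why the corollary speaks of arbitrarily low \emph{but nonzero} efficiency. A secondary point worth making explicit is that the inefficiency here is allowed to be outcome- and setting-dependent, and even effectively state-dependent through its action on the GPT state, so the maximally general model of detector inefficiency is already covered with no separate argument needed. Finally, I would emphasize that ``no detector loophole'' is meant in the sense opposite to the Bell and Kochen-Specker cases: the assertion is not that inefficiency is harmless to the experimenter, but that it can never be exploited by a would-be classical (noncontextual) explanation of the observed statistics.
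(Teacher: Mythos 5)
Your proof is correct and takes essentially the same route as the paper: start from a scenario whose accessible GPT fragment is not simplex-embeddable, use the cone equivalence between it and its inefficient counterpart (the Proposition of Sec.~\ref{subsec:Inefficiencies}) together with Theorem~\ref{maintechnicalres} to conclude the inefficient fragment is also not simplex-embeddable, and translate back to generalized noncontextuality via the correspondence of Ref.~\cite{simplex_forthcoming}. Your cautionary remarks about excluding zero efficiency and about the intended sense of ``no detector loophole'' match the paper's own discussion.
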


If one's detectors are too \emph{noisy} (as opposed to \emph{inefficient}, where the distinction between the two is described in Sec.~\ref{subsec:Inefficiencies}),
 then there is no possibility of a failure of simplex-embeddability, and hence no possibility of a proof of the failure of generalized noncontextuality. This has been discussed at length in Refs.~\cite{PhysRevLett.115.110403,marvian2020inaccessible}.
(It is moreover clear \textit{why} our proof technique does not apply in the case of overly noisy detectors --- the accessible GPT fragment associated with  a multimeter and the one associated to a noisy version thereof 
 are not cone-equivalent, and so Theorem~\ref{maintechnicalres} cannot be applied.)

 \section{Theory-agnostic tomography using accessible GPT fragments}\label{thryagtom}

The framework of GPTs allows one to analyze experimental data without presuming the correctness of quantum theory.  The notion of GPT tomography for states and effects in an experiment was first used in Ref.~\cite{mazurek2016experimental}, consolidated in Ref.~\cite{mazurek2021experimentally}, and further developed in Ref.~\cite{grabowecky2021experimentally}.

In Ref.~\cite{mazurek2016experimental}, it was noted that if one has demonstrated that certain normalized states exist within the GPT governing one's experiment, then all convex mixture of these are known to exist.  
Similarly, if one has demonstrated that certain effects exist within the GPT governing one's experiment, then one has also demonstrated the existence of every effect in the convex hull of (i) this set of effects, (ii) the set of their complements, and (iii) the zero effect and the unit effect. 
Consequently, when looking for a set of states and effects that demonstrate the nonclassicality of what was realized experimentally, e.g., by violating a noncontextuality inequality, one need not use the states and effects that were actually realized in the experiment.  One can instead use \emph{secondary} states and effects 
 whose existence can be inferred from those that were actually realized.
 The accessible GPT fragment describing a given experiment includes all of these secondary states and effects, and thus characterizes all of the states and effects that are known to exist in the GPT by virtue of those that were in fact realized in one's experiment.

 Although previous experimental works considered preparation procedures that yielded normalized states, in certain experiments it is natural to consider states that are never realized deterministically in the experiment, but only with a certain probability, flagged by some outcome variable. In such cases, states are naturally understood as arising from a probabilistic source.  A standard example of this is when the preparations of a system are achieved by `steering'---i.e., when the preparations of one's system are achieved by preparing that system jointly with another and then implementing a measurement on the latter. 
 
If one wishes to analyze the data from an experiment that involves one or more probabilistic sources, then the sources can be modelled by the state space of an accessible GPT fragment (see again Fig.~\ref{fig:tofragment}). Insofar as accessible GPT fragments allow for the possibility of modelling probabilistic sources, they supplement the techniques described in Refs.~\cite{mazurek2016experimental,mazurek2017experimentally,grabowecky2021experimentally}.  Indeed, only if one allows the sort of state space depicted in Fig.~\ref{fig:tofragment} does one have the parametric freedom to describe \emph{not only} the states appearing in a given source \emph{but also} the probabilities with which they appear.

Given that one can conceptualize an arbitrary set of sources and set of measurements as a single multisource and a single multimeter, it follows that the most general prepare-measure experiment constitutes a multisource-multimeter pair, and the latter is precisely what defines an accessible GPT fragment.

It should be noted that prior experiments performing theory-agnostic tomography had detector inefficiencies, but did not aim to quantify these, and thus opted merely to implement GPT tomography on the effects that were realized by \emph{post-selecting} on obtaining a non-null outcome.  If, however, an experimentalist \emph{is} interested in quantifying the probability of obtaining a null outcome, then they simply incorporate a null outcome in the set of GPT effects to which they are fitting. 
 If we are in the special case in which the null outcome contains no information about the system (i.e., when a detector
has the same probability of failing to fire regardless of the
state impinging upon it), then the set of non-null effects form a resolution of a subnormalized version of the unit effect.

Subnormalized versions of the unit effect are analogous to the subnormalized states discussed above, and being interested in the probability of the null outcome is analogous to being interested in the probabilities with which different states appear in a source. In both cases, by refraining from implementing any post-selection in the analysis of the data, one has the possibility of fitting to these probabilities.   

Incorporating parameters for these probabilities can, in principle, change the values of the parameters defining the states and the effects which achieve the best overall fit to the data.  That is, the best-fit states and effects one obtains using the non-post-selected data can in principle be slightly different from those one would obtain using post-selected data.  The non-post-selected fits are to be preferred insofar as the model allows a parametric freedom that is closer to the true freedom that is present in the experimental set-up. As such, we expect the theoretical tools introduced herein to be useful for GPT tomography.

\section{Conclusion and outlook}

We have in this work addressed the question of when particular accessible GPT fragments admit of a \emph{classical} explanation---an embedding into a strictly classical (i.e., simplicial) GPT of some dimension. One could alternatively study the question of which particular accessible GPT fragments admit of a \emph{quantum} explanation---an embedding into a quantum GPT of some dimension. In the special case of unrestricted GPTs, this question was comprehensively answered in Ref.~\cite{mueller2021testing}, which showed that unrestricted GPTs that admit of a quantum explanation are necessarily described by Euclidean Jordan Algebras.
The more general cases (i.e., GPTs which do not satisfy the no-restriction hypothesis, and accessible GPT fragments which are not themselves GPTs) remain to be studied.
This question can be simplified using the fact that quantum explainability will depend only on the cone equivalence class rather than on the specific accessible GPT fragment being considered. This follows from our proof of Theorem~\ref{maintechnicalres}, but where one replaces the simplicial cone with the quantum cone of positive semidefinite operators.

We also expect the notion of accessible GPT
fragments to be useful in the development of a
resource-theoretic approach to quantifying failures of
generalized noncontextuality in prepare-measure scenarios.  As we noted in the introduction, questions about generalized noncontextuality in an operational theory map to questions about simplex embeddability for the corresponding GPT.  In particular, a multisource-multimeter pair in the operational theory fails to admit of a generalized-noncontextual model if and only if the accessible GPT fragment associated to it fails to be simplex embeddable.  Given Theorem~\ref{maintechnicalres}, the distinction between an accessible GPT fragment that is simplex-embeddable, and therefore free within the resource theory, and one that is not simplex-embeddable, hence nonfree, depends only
on the cone-equivalence class of the fragment.
We address this problem in forthcoming work by showing that this perspective reduces testing for freeness to a linear program.

\tocless\section{Acknowledgements}

We thank Victoria J.~Wright for useful discussions on the connection between our work and Ref.~\cite{wright2020general}.  We also thank Ludovico Lami for useful comments on our manuscript, and for pointing out the connection between cone-equivalence and projective-equivalence. 
This research was supported by Perimeter Institute for Theoretical Physics. Research at Perimeter Institute is supported in part by the Government of Canada through the Department of Innovation, Science and Economic Development and by the Province of Ontario through the Ministry of Colleges and Universities. 
DS, and ABS acknowledge support by the Foundation for Polish Science (IRAP project, ICTQT, contract no.2018/MAB/5, co-financed by EU within Smart Growth Operational Programme). 	JHS was supported by the
National Science Centre, Poland (Opus project, Categorical
Foundations of the Non-Classicality of Nature, project
no. 2021/41/B/ST2/03149). RK is supported by the Charg\'e de recherches fellowship of the Fonds de la Recherche Scientifique - FNRS (F.R.S.-FNRS), Belgium.
All of the diagrams within this manuscript were prepared using TikZit.

\bibliographystyle{apsrev4-2-wolfe}
\setlength{\bibsep}{3pt plus 3pt minus 2pt}
\nocite{apsrev42Control}
\bibliography{bib}

\appendix
\renewcommand{\addcontentsline}[2][]{\nocontentsline#1{#2}}

\section[Proof that every effect couples to the unit effect]{Proof of Eq.~\eqref{eq:MeasCompletion}}\label{proofcomplem}

We now prove Eq.~\eqref{eq:MeasCompletion}, namely that in an accessible GPT fragment (just as in a standard GPT), it holds that for any effect $e \in \mathcal{E}_M$, there exists another effect $e^\perp \in \mathcal{E}_M$ such that 
\begin{equation*}
e+e^\perp = u_M.
\end{equation*}

Consider some substochastic comb $d_e$ that realizes effect $e$ as
\beq
}\ \ =\ \ %
\InputIfFileExists{Diagrams/effectRealisation.tikz}{}{\input{./figures/Diagrams/effectRealisation.tikz}} \,,
\eeq
with
\beq\label{eq:dec}
\InputIfFileExists{Diagrams/effectReal1.tikz}{}{\input{./figures/Diagrams/effectReal1.tikz}} \ \ = \ \ %
\InputIfFileExists{Diagrams/clascombel1.tikz}{}{\input{./figures/Diagrams/clascombel1.tikz}}\,.
\eeq
Now, for any subnormalized probability distribution $q$ on the classical variables $Y,Z$, there exists another subnormalized probability distribution $q'$ on  $Y,Z$ such that the distribution $\tilde{q}:=q+q'$ is normalized. Similarly, for every response function $r$ on the classical variables $B,Z$, there exists a response function $r'$ on $B,Z$ such that $\tilde{r} := r+r'=\mathbf{1}_{B\times Z}$, the unit effect on $B\times Z$. Therefore, on the one hand,
\beq
\InputIfFileExists{Diagrams/clascombel.tikz}{}{\input{./figures/Diagrams/clascombel.tikz}} \ \ = \ \ %
\InputIfFileExists{Diagrams/clascombel5.tikz}{}{\input{./figures/Diagrams/clascombel5.tikz}} \ \ =\ \ %
\InputIfFileExists{Diagrams/effectReal3.tikz}{}{\input{./figures/Diagrams/effectReal3.tikz}}
\eeq
where $p$ is the marginal of $\tilde{q}$ on $Y$ and hence is necessarily normalized, and, on the other hand,
\beq
\InputIfFileExists{Diagrams/clascombel.tikz}{}{\input{./figures/Diagrams/clascombel.tikz}} \ \ = \ \ %
\InputIfFileExists{Diagrams/clascombel1.tikz}{}{\input{./figures/Diagrams/clascombel1.tikz}}  + \left( %
\InputIfFileExists{Diagrams/clascombel2.tikz}{}{\input{./figures/Diagrams/clascombel2.tikz}}  +  %
\InputIfFileExists{Diagrams/clascombel3.tikz}{}{\input{./figures/Diagrams/clascombel3.tikz}}  +  %
\InputIfFileExists{Diagrams/clascombel4.tikz}{}{\input{./figures/Diagrams/clascombel4.tikz}}\right) \,. 
\eeq
Equating these two expressions and identifying the bracketed part of the second as $d_e^\perp$ we have that:
\beq\label{eq:undec}
\InputIfFileExists{Diagrams/effectReal3.tikz}{}{\input{./figures/Diagrams/effectReal3.tikz}} \ \ = \ \ %
\InputIfFileExists{Diagrams/clascombel1.tikz}{}{\input{./figures/Diagrams/clascombel1.tikz}}  +  %
\InputIfFileExists{Diagrams/effectReal2.tikz}{}{\input{./figures/Diagrams/effectReal2.tikz}}\,.
\eeq
The substochastic comb $d_e^\perp$ then realizes the effect $e^\perp$, which we can easily check satisfies $e+e^\perp = u_M$ via:
\begin{align}
} +\ %
\begin{tikzpicture}
	\begin{pgfonlayer}{nodelayer}
		\node [style=none] (5) at (0, -0.75) {};
		\node [style=copoint] (6) at (0, 0.5) {$e^\perp$};
		\node [style=right label] (14) at (0, -0.5) {$S_M$};
	\end{pgfonlayer}
	\begin{pgfonlayer}{edgelayer}
		\draw [qWire] (6) to (5.center);
	\end{pgfonlayer}
\end{tikzpicture}
}\  = \ \ &%
\InputIfFileExists{Diagrams/effectRealisation.tikz}{}{\input{./figures/Diagrams/effectRealisation.tikz}}\ + %
\InputIfFileExists{Diagrams/effectPerpRealisation.tikz}{}{\input{./figures/Diagrams/effectPerpRealisation.tikz}}\\
\overset{\text{Eqs.\eqref{eq:dec}\&\eqref{eq:undec}}}{=}\ \  &%
\InputIfFileExists{Diagrams/GPTUnit.tikz}{}{\input{./figures/Diagrams/GPTUnit.tikz}}\\
=\ \   &%
}.  
\end{align}

\section{Proof of Theorem~\ref{maintechnicalres}}\label{app:maintechnicalres}

Recall that a pair of accessible GPT fragments with the same probability rule and unit, $G=(\Omega,\mathcal{E},{B},u)$ and $G'=(\Omega',\mathcal{E}',{B},u)$, are cone-equivalent iff 
\begin{align}
    \mathsf{Cone}[\Omega'] &= \mathsf{Cone}[\Omega]\\
    \mathsf{Cone}[\mathcal{E}'] &= \mathsf{Cone}[\mathcal{E}].
\end{align}

We now prove our main technical result, repeated here:
\addtocounter{thmrepeat}{\getrefnumber{maintechnicalres}}
\addtocounter{thmrepeat}{-1}
\begin{thmrepeat}
If the accessible GPT fragments $G$ and $G'$ are cone-equivalent, then
$G'$ is simplex-embeddable if and only if $G$ is simplex-embeddable. 
\end{thmrepeat}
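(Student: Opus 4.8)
The plan is to replace simplex-embeddability of a fragment (Definition~\ref{simplexembeddingfrag}) by an equivalent ``simplicial-cone'' criterion that refers only to the cones $\mathsf{Cone}[\Omega]$, $\mathsf{Cone}[\mathcal{E}]$ and the unit $u$ --- exactly the data that cone equivalence leaves invariant --- and then to transfer that criterion from $G$ to $G'$ verbatim, using the \emph{same} embedding maps and the same dimension $n$.

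\textbf{Step 1: a cone reformulation.} I will show that a fragment $(\Omega,\mathcal{E},B,u)$ is simplex-embeddable if and only if there are $n\in\mathds{N}$ and linear maps $\iota:\mathsf{Span}[\Omega]\to\mathds{R}^n$ and $\kappa:\mathsf{Span}[\mathcal{E}]\to{\mathds{R}^n}^*$ such that: (a) $\iota$ maps $\mathsf{Cone}[\Omega]$ into the positive orthant $\mathds{R}^n_{\geq 0}$ and $\kappa$ maps $\mathsf{Cone}[\mathcal{E}]$ into its dual; (b) $\kappa(e)[\iota(s)]=B(e,s)$ for all $s\in\Omega$ and $e\in\mathcal{E}$; and (c) $\kappa(u)=\mathbf{1}_n$. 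The ``only if'' direction is immediate: a genuine simplex embedding already satisfies (b) and (c), and since $\Delta^n_{\text{sub}}\subseteq\mathds{R}^n_{\geq 0}$ and ${\Delta^n_{\text{sub}}}^*=[0,1]^n\subseteq\mathds{R}^n_{\geq 0}$, linearity upgrades the membership statements on $\Omega$ and $\mathcal{E}$ to statements about the cones. For the ``if'' direction I recover the two membership conditions of Definition~\ref{simplexembeddingfrag}. For $s\in\Omega$, condition (a) gives $\iota(s)\geq 0$ componentwise, while the coordinate sum $\sum_i\iota(s)_i$ equals $\kappa(u)[\iota(s)]$ by (c), hence equals $B(u,s)\in[0,1]$ by (b), so $\iota(s)\in\Delta^n_{\text{sub}}$. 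For $e\in\mathcal{E}$, its complement $e^\perp:=u-e$ lies in $\mathcal{E}$ by Eq.~\eqref{eq:MeasCompletion}, so applying (a) to both $e$ and $e^\perp$ gives $0\leq\kappa(e)\leq\kappa(u)=\mathbf{1}_n$ componentwise, i.e.\ $\kappa(e)\in[0,1]^n={\Delta^n_{\text{sub}}}^*$.

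\textbf{Step 2: transfer across cone equivalence.} Suppose $G=(\Omega,\mathcal{E},B,u)$ is simplex-embeddable and let $(n,\iota,\kappa)$ witness the criterion of Step 1. Cone equivalence gives $\mathsf{Cone}[\Omega']=\mathsf{Cone}[\Omega]$ and $\mathsf{Cone}[\mathcal{E}']=\mathsf{Cone}[\mathcal{E}]$; taking spans, $\mathsf{Span}[\Omega']=\mathsf{Span}[\Omega]$ and $\mathsf{Span}[\mathcal{E}']=\mathsf{Span}[\mathcal{E}]$, so $\iota$ and $\kappa$ are defined on precisely the right spaces for $G'$, and conditions (a) and (c) hold for $G'$ with the same triple (using that $G'$ has the same unit $u$). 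For (b): the form $(e,s)\mapsto\kappa(e)[\iota(s)]$ and the form $B$ are both bilinear on $\mathsf{Span}[\mathcal{E}]\times\mathsf{Span}[\Omega]$ and agree on $\mathcal{E}\times\Omega$, which spans this product, so they agree everywhere --- in particular on $\mathcal{E}'\times\Omega'$. Hence $(n,\iota,\kappa)$ witnesses the criterion for $G'$, so by Step 1 $G'$ is simplex-embeddable. Exchanging the roles of $G$ and $G'$ yields the converse, proving the biconditional. (The more permissive notion of cone equivalence in Remark~\ref{rem:EquivalentGPT} then follows at once, since simplex-embeddability is manifestly preserved by the reversible linear relabellings defining equivalence of fragments.)

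\textbf{Anticipated obstacle.} Essentially all of the content sits in Step 1: the cone conditions are a priori strictly weaker than those of Definition~\ref{simplexembeddingfrag}, and closing the gap requires the two extra ingredients available in every fragment --- the unit effect, which fixes the normalization of $\iota(s)$ via $\kappa(u)=\mathbf{1}_n$, and closure of $\mathcal{E}$ under complementation (Eq.~\eqref{eq:MeasCompletion}), which confines $\kappa(e)$ to the unit cube. Once that reformulation is in place, Step 2 is routine bookkeeping about spans of convex sets and bilinearity.
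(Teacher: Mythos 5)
Your proof is correct. In substance it uses exactly the ingredients of the paper's own proof in Appendix~\ref{app:maintechnicalres}: equality of cones gives equality of spans, bilinearity extends the probability-rule condition from one fragment's states and effects to the other's, normalization of $\iota(s)$ comes from $\kappa(u)=\mathbf{1}_n$ together with $B(u,s)\leq 1$, and the bound $\kappa(e)\leq\mathbf{1}_n$ comes from the complement effect $e^\perp$ of Eq.~\eqref{eq:MeasCompletion}. What differs is the organization: you factor the argument through an intermediate ``cone criterion'' that is manifestly invariant under cone equivalence, which is precisely the strategy of the paper's alternative proof in Appendix~\ref{strongerthm} --- except that your criterion retains the unit condition $\kappa(u)=\mathbf{1}_n$, whereas the paper's notion of simplicial-cone embeddability drops it. Keeping the unit condition makes your Step~1 equivalence elementary (no need for the homogeneity of the simplicial cone and the automorphism $\tau$ used in Appendix~\ref{strongerthm}), at the mild cost that your invariant is not purely a function of $(\mathsf{Cone}[\Omega],\mathsf{Cone}[\mathcal{E}],B)$ but also references $u$; since Definition~\ref{def:CE} requires cone-equivalent fragments to share $u$ and $B$, this costs nothing for the theorem as stated, though it would not by itself yield the slightly stronger standalone characterization (simplex embeddability depends only on the cone data) that Appendix~\ref{strongerthm} establishes.
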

\proof

As cone equivalence is a symmetric notion, we only need to prove one direction. We now consider the case where $G'$ is simplex-embeddable, and we show that this implies that $G$ is simplex-embeddable. 

By assumption, then, there exists some integer $n$ and linear maps $\iota:\mathsf{Span}[\Omega'] \to \mathds{R}^n$ and $\kappa:\mathsf{Span}[\mathcal{E}'] \to {\mathds{R}^n}^*$ such that for all $s'\in\Omega'$ and $e'\in \mathcal{E}'$, one has
\begin{align}
\iota(s') &\in \Delta_{sub}^n \label{embcond1},\\
\kappa(e') &\in {\Delta_{sub}^n}^* \label{embcond2},\\
\kappa(e')[\iota(s')]  &= {B}(e',s')\label{embcond3},\\
\kappa(u)&=\mathbf{1}_n \label{embcond4}.
\end{align}
We now show that the embedding maps $\iota$ and $\kappa$ also constitute a simplex embedding for $G$,
i.e., that for all $s\in\Omega$ and $e\in \mathcal{E}$, we have
\begin{align}
\iota(s) &\in \Delta_{sub}^n \label{embcond1p},\\
\kappa(e) &\in {\Delta_{sub}^n}^* \label{embcond2p},\\
\kappa(e)[\iota(s)]  &= {B}(e,s) \label{embcond3p},\\
\kappa(u)&=\mathbf{1}_n. \label{embcond4p}
\end{align}

First, we note that the unit effects for $G$ and $G'$ are (by assumption) the same, and so Eq.~\eqref{embcond4} and Eq.~\eqref{embcond4p} are exactly the same condition.

Next, we show that Eq.~\eqref{embcond3} implies Eq.~\eqref{embcond3p}.
Noting that $\mathsf{Span}[X]=\mathsf{Span}[\mathsf{Cone}[X]]$ and recalling that (by cone equivalence) $\mathsf{Cone}[\Omega'] = \mathsf{Cone}[\Omega]$ and 
    $\mathsf{Cone}[\mathcal{E}'] = \mathsf{Cone}[\mathcal{E}]$, it follows that
\begin{align}
\mathsf{Span}[\Omega']&=\mathsf{Span}[\Omega]\\
\mathsf{Span}[\mathcal{E}']&=\mathsf{Span}[\mathcal{E}].
\end{align}
It follows that every effect $e\in \mathcal{E}$ can be decomposed as a linear combination of effects in $\mathcal{E}'$, i.e., 
\beq 
e = \sum_i \alpha_i e'_i \text{ \ \ where } \alpha_i \in \mathds{R}  \text{ and } e'_i \in \mathcal{E}',
\eeq
and similarly, that every state $s\in\Omega$ can be decomposed as a linear combination of states in $\Omega'$, i.e.,
\beq s=\sum_j \beta_j s'_j  \text{ \ \ where } \beta_j \in \mathds{R} \text{ and } s'_i \in \Omega'.
\eeq
It follows that:
\begin{align}
{B}(e,s) &= \sum_{ij} \alpha_i \beta_j\ {B}(e'_i,s'_j) \\
&= \sum_{ij} \alpha_i \beta_j\ \kappa(e'_i)[\iota(s'_j)] \\
&= \kappa(e)[\iota(s)], \label{randomeqnlabel2}
\end{align}
where in the first equality we use the decompositions of $e$ and $s$ together with bilinearity of ${B}$, in the second we use Eq.~\eqref{embcond3}, and in the third we use bilinearity of composition and linearity of $\kappa$ and $\iota$.

Next, we show that Eq.~\eqref{embcond1} implies Eq.~\eqref{embcond1p}, so that
$\iota(s)\in \Delta_{sub}^n$ for all states $s\in \Omega$.
Note that $\Delta_{sub}^n$ is equal to the intersection of $\mathsf{Cone}[\Delta_{sub}^n]$ with the space
\beq 
\{v\in \mathds{R}^n : \mathbf{1}_n[v] \leq 1\}.
\eeq
of subnormalized vectors in the vector space $\mathds{R}^n$. 
Now, Eq.~\eqref{embcond1} states that 
\beq \iota(\Omega') \subseteq \Delta_{sub}^n,
\eeq 
 and so
\beq
\mathsf{Cone}[\iota(\Omega')] \subseteq \mathsf{Cone}[\Delta_{sub}^n],
\eeq
which by linearity of $\iota$ implies that
\beq \iota(\mathsf{Cone}[\Omega']) \subseteq \mathsf{Cone}[\Delta_{sub}^n],
\eeq 
and as 
$\mathsf{Cone}[\Omega'] = \mathsf{Cone}[\Omega]$,
we therefore have that 
\beq
\iota(\mathsf{Cone}[\Omega]) \subseteq \mathsf{Cone}[\Delta_{sub}^n].
\eeq 
In particular, this means that $\iota(s) \in \mathsf{Cone}[\Delta_{sub}^n]$ for every $s \in \mathsf{Cone}[\Omega]$. All that remains is to show that $\iota(s)$ is a subnormalized vector, i.e, that $\mathbf{1}_n[\iota(s)] \leq 1$. Eq.~\eqref{embcond4} tells us that $\mathbf{1}_n = \kappa(u)$; using this together with the fact (derived above in Eq.~\eqref{randomeqnlabel2}) that $\kappa(e)[\iota(s)] = {B}(e,s)$), we find that
\begin{align}
\mathbf{1}_n[\iota(s)] &= \kappa(u)[\iota(s)]  \\
&= {B}(u,s) \\
&\leq 1,
\end{align}
where the last line follows from the definition of $u$ and the fact that $s\in \Omega$.
So $\iota(s)$ is indeed subnormalized, and Eq.~\eqref{embcond1} follows.

Finally, we show that Eq.~\eqref{embcond2} implies Eq.~\eqref{embcond2p}.
Eq.~\eqref{embcond2} states that 
\beq \kappa(\mathcal{E}') \subseteq {\Delta_{sub}^n}^*,
\eeq 
which by linearity of $\kappa$ implies that
\beq \kappa(\mathsf{Cone}[\mathcal{E}']) \subseteq \mathsf{Cone}[{\Delta_{sub}^n}^*],
\eeq 
and as 
$\mathsf{Cone}[\mathcal{E}'] = \mathsf{Cone}[\mathcal{E}]$
we have that 
\beq\label{kappaembed}
\kappa(\mathsf{Cone}[\mathcal{E}]) \subseteq \mathsf{Cone}[{\Delta_{sub}^n}^*].
\eeq 
In particular, this means that $\kappa(e) \in \mathsf{Cone}[{\Delta_{sub}^n}^*]$ for every $e \in \mathsf{Cone}[\mathcal{E}]$.
We can then note that ${\Delta_{sub}^n}^*$ is equal to the set of vectors which are in  $\mathsf{Cone}[{\Delta_{sub}^n}^*]$ and which are below $\mathbf{1}_n$ in the partial order defined by $\mathsf{Cone}[{\Delta_{sub}^n}^*]$ via
\beq 
v\leq v'\quad  \iff\quad  \exists c \in \mathsf{Cone}[{\Delta_{sub}^n}^*] \text{ s.t. } v+c = v'.
\eeq 
As we already know that $\kappa(e) \in \mathsf{Cone}[{\Delta_{sub}^n}^*]$  for $e \in \mathcal{E}$, all that we must show is that $\kappa(e) \leq \mathbf{1}_n$ with respect to this partial order. That is, we need to show that there exists $c_e\in \mathsf{Cone}[{\Delta_{sub}^n}^*]$ such that 
\beq\label{identifyingce}
\kappa(e) + c_e = \mathbf{1}_n.
\eeq
Recall from Eq.~\eqref{eq:MeasCompletion} that for any effect $e \in \mathcal{E}$ in an accessible GPT fragment, there exists some effect $e^\perp \in \mathcal{E}$  such that 
\beq 
e+e^\perp = u.
\eeq 
Applying $\kappa$ to this equation, and using linearity of $\kappa$, we obtain 
\beq 
\kappa(e) + \kappa(e^\perp) = \kappa(u).
\eeq
By Eq.~\eqref{embcond4p}, $\kappa(u) = \mathbf{1}_n$. Using the fact that $e^\perp \in \mathcal{E}$ and hence (by Eq.~\eqref{kappaembed}) that $\kappa(e^\perp) \in \mathsf{Cone}[{\Delta_{sub}^n}^*]$, we can then see that the choice $c_e := \kappa(e^\perp)$ satisfies Eq.~\eqref{identifyingce}, and so $\kappa(e) \leq \mathbf{1}_n$. This shows that $\kappa(e) \in {\Delta_{sub}^n}^*$, demonstrating Eq.~\eqref{embcond2p}.

Putting all this together, what we have shown is that the simplex embedding maps $\iota$ and $\kappa$ for $G'$ are necessarily also simplex embedding maps for $G$, and so, if $G'$ is simplex-embeddable, then so too is $G$. 
\endproof

\section{Flag-convexification yields a cone-equivalent accessible GPT fragment} \label{provingFCmeansCE}

Recall the accessible GPT fragment $G_{P^\mu M^\nu}=(\Omega_{P^\mu}, \mathcal{E}_{M^\nu}, {B}_{P^\mu M^\nu}, u_{M^\nu})$ defined in Eqs.~\eqref{eq:fcGPT}-\eqref{unitfcgptex}, arising from the flag-convexification of some multisource-multimeter pair $(P,M)$.
We will now show that $G_{P^\mu M^\nu}$ is  cone-equivalent to the accessible GPT fragment $G_{PM}= (\Omega_P, \mathcal{E}_M, {B}_{PM}, u_M)$ associated with the original multisource-multimeter pair. 

First, we relate the relevant subspaces:
\begin{lemma}
The vector space spanned by the flag-convexified accessible GPT fragment's states (effects) is equal to the vector space spanned by the original accessible GPT fragment's states (effects). That is, $S_P=S_{P^\mu}$ and  $S_{M}=S_{M^\nu}$.
\end{lemma}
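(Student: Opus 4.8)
The plan is to reduce both equalities to a single structural observation: the span of the states (resp.\ effects) accessible to a multisource (resp.\ multimeter) is already spanned by its \emph{atomic} states (resp.\ effects), and flag-convexification merely rescales these atomic elements by strictly positive weights, which leaves a span unchanged.

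First I would fix notation: write $s_{x,a}\in S$ for the (sub-normalized) GPT state vector that the multisource $P$ outputs conditioned on setting $X=x$ and outcome $A=a$, so that $P$ acting on the setting value $x$ produces $\sum_a \ket{a} \otimes s_{x,a}$. The claim is $S_P = \mathsf{Span}\{s_{x,a} : x\in X,\, a\in A\}$. The inclusion ``$\supseteq$'' holds because each $s_{x,a}$ is itself accessible: take the substochastic comb of Eq.~\eqref{eq:crc} with trivial $Z$, with $q$ the point distribution on $X=x$ and $r$ the deterministic response picking out $A=a$. The inclusion ``$\subseteq$'' holds because an arbitrary $c\in\mathsf{SubStochComb}$, decomposed as a sub-normalized distribution $q$ over $X\times Z$ composed with a response function $r$ over $A\times Z$, yields the accessible state $\sum_{x,a}\bigl(\sum_z q(x,z)\,r(a,z)\bigr)\,s_{x,a}$, a (non-negative) linear combination of the $s_{x,a}$. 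The identical argument, run on the dual side with $e_{y,b}\in S^*$ denoting the effect of the measurement $M_y$ on outcome $b$, gives $S_M^{*}=\mathsf{Span}\{e_{y,b}: y\in Y,\, b\in B\}$.

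Next I would analyze the flag-convexified versions. By construction, the flag-convexified source $P^\mu$ outputs, conditioned on the joint outcome-and-flag value $(a,x)$, the vector $\mu(x)\,s_{x,a}$ (the flag records the value $x$ sampled from $\mu$); composing with an arbitrary $c'\in\mathsf{SubStochEff}$ on the classical system $A\times X$ (as in Eq.~\eqref{eq:fcGPT}) yields $\sum_{x,a}\mu(x)\,c'(a,x)\,s_{x,a}$, while taking $c'$ to be the indicator of a single pair $(x_0,a_0)$ shows every $\mu(x)\,s_{x,a}$ is accessible; hence $S_{P^\mu}=\mathsf{Span}\{\mu(x)\,s_{x,a}\}$, and dually $S_{M^\nu}^{*}=\mathsf{Span}\{\nu(y)\,e_{y,b}\}$. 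Since $\mu$ is a full-support distribution, every coefficient $\mu(x)$ is strictly positive, so rescaling a spanning set by these weights does not change its span; therefore $S_{P^\mu}=\mathsf{Span}\{s_{x,a}\}=S_P$, and the same argument with the full-support distribution $\nu$ gives $S_{M^\nu}^{*}=S_M^{*}$ (equivalently $S_{M^\nu}=S_M$).

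I do not expect a genuine obstacle here; the one point that must not be glossed over is the role of the full-support hypothesis on $\mu$ and $\nu$ --- it is exactly what guarantees no atomic state or effect is annihilated by the rescaling --- together with the bookkeeping verification that the atomic states and effects are genuinely among the accessible ones (so that both inclusions hold) and that post-processing by a substochastic comb never leaves their linear span.
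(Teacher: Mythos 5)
Your proposal is correct and follows essentially the same route as the paper's proof: reduce the span of accessible states (effects) to the span of the atomic ones via the decomposition of an arbitrary substochastic comb into point-supported combs, observe that flag-convexification rescales each atomic element by the strictly positive weight $\mu(x)$ (resp.\ $\nu(y)$), and conclude the spans coincide since full support guarantees no atomic element is annihilated. The only difference is that you spell out both inclusions showing the atomic elements are themselves accessible, which the paper leaves implicit.
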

\proof
First note that the span of all of the accessible states is the same as the span of the states accessible by ``atomic'' substochastic combs:
\beq
S_P := \mathsf{Span}\left[\left\{%
\InputIfFileExists{Diagrams/OmegaP3.tikz}{}{\input{./figures/Diagrams/OmegaP3.tikz}} \right\}\right] = \mathsf{Span}\left[\left\{ %
\InputIfFileExists{Diagrams/RayP.tikz}{}{\input{./figures/Diagrams/RayP.tikz}} \right\}\right] ,
\eeq
and similarly in the flag convexified case:
\beq
S_{P^\mu} :=  \mathsf{Span}\left[\left\{%
\InputIfFileExists{Diagrams/fcnuOmegac.tikz}{}{\input{./figures/Diagrams/fcnuOmegac.tikz}} \right\}\right] = \mathsf{Span}\left[\left\{ %
\InputIfFileExists{Diagrams/span1.tikz}{}{\input{./figures/Diagrams/span1.tikz}} \right\}\right].
\eeq
 This is because $c$ and $c'$ can always be decomposed as 
\begin{align}\label{decompceffect}
\InputIfFileExists{Diagrams/classComb.tikz}{}{\input{./figures/Diagrams/classComb.tikz}} &= \sum_{ax} c_{ax} %
\InputIfFileExists{Diagrams/classCombDecomp.tikz}{}{\input{./figures/Diagrams/classCombDecomp.tikz}} \text{ and} \\ \label{decompceffectfc} %
\InputIfFileExists{Diagrams/classCombc.tikz}{}{\input{./figures/Diagrams/classCombc.tikz}} &= \sum_{ax} c'_{ax} %
\InputIfFileExists{Diagrams/classCombDecompc.tikz}{}{\input{./figures/Diagrams/classCombDecompc.tikz}} \text{, respectively,}
\end{align}
for some $c_{ax}, c'_{ax}\in \mathds{R}^+$.
Then, we can write
\beq
\InputIfFileExists{Diagrams/span1.tikz}{}{\input{./figures/Diagrams/span1.tikz}} = %
\InputIfFileExists{Diagrams/fcnuRayP.tikz}{}{\input{./figures/Diagrams/fcnuRayP.tikz}} = %
\InputIfFileExists{Diagrams/RayP.tikz}{}{\input{./figures/Diagrams/RayP.tikz}} %
\begin{tikzpicture}
	\begin{pgfonlayer}{nodelayer}
		\node [style=point] (45) at (0., -0.5) {$\mu$};
		\node [style=none] (47) at (0., 0) {};
		\node [style=copoint] (48) at (0., 0.5) {$x$};
	\end{pgfonlayer}
	\begin{pgfonlayer}{edgelayer}
		\draw [cWire] (48) to (47.center);
		\draw [cWire] (47.center) to (45);
	\end{pgfonlayer}
\end{tikzpicture}
},
\eeq
and hence
\beq
S_{P^\mu} =  \mathsf{Span}\left[\left\{%
\InputIfFileExists{Diagrams/RayP.tikz}{}{\input{./figures/Diagrams/RayP.tikz}} %
} \right\}\right].
\eeq
Since $\mu$ has full support, $%
} \neq 0$, and so 
\beq
S_{P^\mu} =  \mathsf{Span}\left[\left\{%
\InputIfFileExists{Diagrams/RayP.tikz}{}{\input{./figures/Diagrams/RayP.tikz}} \right\}\right] = S_P.
\eeq
The proof for the measurement side is exactly analogous.
\endproof

This has immediate consequences for the structure of the two GPTs. 
The fact that the subspaces are the same immediately implies that the inclusion maps and projection maps are the same. That is, the fact that $S_P = S_{P^\mu}$ means that
\beq
\InputIfFileExists{Diagrams/inc1.tikz}{}{\input{./figures/Diagrams/inc1.tikz}} =\ %
\InputIfFileExists{Diagrams/inc2.tikz}{}{\input{./figures/Diagrams/inc2.tikz}} \ \text{and} \ \ %
\InputIfFileExists{Diagrams/projn1.tikz}{}{\input{./figures/Diagrams/projn1.tikz}} =\ %
\InputIfFileExists{Diagrams/projn2.tikz}{}{\input{./figures/Diagrams/projn2.tikz}}
\eeq
and the fact that $S_M = S_{M^\nu}$ means that
\beq
\InputIfFileExists{Diagrams/inc3.tikz}{}{\input{./figures/Diagrams/inc3.tikz}} =\ %
\InputIfFileExists{Diagrams/inc4.tikz}{}{\input{./figures/Diagrams/inc4.tikz}} \ \text{and} \ \ %
\InputIfFileExists{Diagrams/projn3.tikz}{}{\input{./figures/Diagrams/projn3.tikz}} =\ %
\InputIfFileExists{Diagrams/pojn4.tikz}{}{\input{./figures/Diagrams/pojn4.tikz}}.
\eeq
 Given that the probability rule and the unit effect for the accessible GPT fragment can be defined in terms of these inclusion and projection maps (as is done in Eqs.~\eqref{eq:GPTProb} and \eqref{defnunit}, respectively) it therefore follows that ${B}_{P^\mu M^\nu} = {B}_{PM}$ and $u_{M^\nu}=u_M$. 

We can now re-express the state space and effect space (Eq.~\eqref{eq:fcGPT}) for the accessible GPT fragment in the flag-convexified scenario by using the projectors from the original scenario:
\begin{align}
\Omega_{P^\mu} &= \left\{%
}  = 
%
\InputIfFileExists{Diagrams/cone1.tikz}{}{\input{./figures/Diagrams/cone1.tikz}} \right\}, \\
\mathcal{E}_{M^\nu} &=  \left\{%
} = 
%
\InputIfFileExists{Diagrams/simpEffectSpaceFC.tikz}{}{\input{./figures/Diagrams/simpEffectSpaceFC.tikz}}\right\}.
\end{align}
We now use this to show that the state and effect cones for these two GPTs are identical. 
\begin{lemma}
The cone generated by the state (effect) space of a flag-convexified accessible GPT fragment is equal to the cone generated by the state (effect) space of the original accessible GPT fragment. That is, $\mathsf{Cone}[\Omega_P] = \mathsf{Cone}[\Omega_{P^\mu}]$ and $\mathsf{Cone}[\mathcal{E}_M] = \mathsf{Cone}[\mathcal{E}_{M^\nu}]$.
\end{lemma}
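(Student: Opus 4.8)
The plan is to prove the two inclusions for the state spaces, $\mathsf{Cone}[\Omega_{P^\mu}] \subseteq \mathsf{Cone}[\Omega_P]$ and $\mathsf{Cone}[\Omega_P] \subseteq \mathsf{Cone}[\Omega_{P^\mu}]$, separately; the effect-space equality $\mathsf{Cone}[\mathcal{E}_M] = \mathsf{Cone}[\mathcal{E}_{M^\nu}]$ then follows by the identical argument with the multimeter $M$, its flag-convexification $M^\nu$, the full-support distribution $\nu$, the atomic effects, and the substochastic combs $d,d'$ in place of their state-side counterparts. Throughout I would use the previous lemma, which identifies $S_P = S_{P^\mu}$ and $S_M = S_{M^\nu}$ and hence makes the inclusion and projection maps --- and therefore the vectors representing states and effects --- directly comparable across the two fragments. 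For the first inclusion I would take a generic element of $\Omega_{P^\mu}$, namely the flag-convexified source $P^\mu$ post-composed with a substochastic effect $c'$ on the joint (flag, outcome) classical system, and note that, since $P^\mu$ is simply $P$ with its setting wire internally sampled from $\mu$ and copied onto the flag wire, this composite has the form ``$P$ post-composed with a substochastic comb $c$'', where $c$ absorbs $\mu$, the copy map, and $c'$. Because $\mu$ is normalized and the copy map is deterministic, $c$ is substochastic, so the state in fact already belongs to $\Omega_P$; hence $\Omega_{P^\mu} \subseteq \Omega_P$ and, taking cones, $\mathsf{Cone}[\Omega_{P^\mu}] \subseteq \mathsf{Cone}[\Omega_P]$. (This is just the subfragment statement, which one could alternatively cite.)

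For the reverse inclusion I would argue in two steps. First, reduce to ``atomic'' states: by the decomposition $c = \sum_{ax} c_{ax}\,(\text{atomic comb for }a,x)$ with $c_{ax}\in\mathds{R}^+$ from Eq.~\eqref{decompceffect}, every element of $\Omega_P$ is a nonnegative linear combination of the ray generators obtained by feeding $P$ the deterministic setting $x$ and post-selecting its outcome on $a$. Since $\Omega_{P^\mu}$ is convex, $\mathsf{Cone}[\Omega_{P^\mu}]$ is a convex cone and hence closed under nonnegative linear combinations, so it suffices to show that each such atomic state lies in $\mathsf{Cone}[\Omega_{P^\mu}]$. Second, for fixed $(x,a)$, post-compose $P^\mu$ with the substochastic effect on the (flag, outcome) system that post-selects the flag on the value $x$ and the outcome on $a$; by the defining property of the classical copy map this yields exactly $\mu(x)$ times the atomic state for $(x,a)$. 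Since $\mu$ has full support, $\mu(x)>0$, so the atomic state equals $\mu(x)^{-1}$ times an element of $\Omega_{P^\mu}$ and therefore lies in $\mathsf{Cone}[\Omega_{P^\mu}]$. Combining the two steps gives $\Omega_P \subseteq \mathsf{Cone}[\Omega_{P^\mu}]$, hence $\mathsf{Cone}[\Omega_P] \subseteq \mathsf{Cone}[\Omega_{P^\mu}]$; together with the first inclusion this proves $\mathsf{Cone}[\Omega_P] = \mathsf{Cone}[\Omega_{P^\mu}]$.

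The one step requiring genuine care --- and the place where the full-support hypothesis on $\mu$ (and $\nu$) is indispensable --- is the second step of the reverse inclusion: one must verify diagrammatically that reaching into the flag-convexified source and post-selecting the flag wire on a fixed value $x$ returns precisely $\mu(x)$ times the corresponding atomic preparation, and that the analogous manipulation on the meter side returns $\nu(y)$ times the atomic effect for $(y,b)$. Without full support, the atomic states (resp.\ effects) associated to zero-weight settings would be unreachable in the flag-convexified fragment and the cones could genuinely differ. Everything else is routine bookkeeping: checking that each classical map built along the way is a bona fide element of $\mathsf{SubStochComb}$ or $\mathsf{SubStochEff}$ so that the states and effects produced really do lie in $\Omega_P,\Omega_{P^\mu}$ (resp.\ $\mathcal{E}_M,\mathcal{E}_{M^\nu}$), and invoking the previous lemma's identification of subspaces and projections to compare vectors between the two fragments. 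Applying this argument verbatim to the effect side completes the proof.
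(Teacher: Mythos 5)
Your proof is correct and follows essentially the same route as the paper's: both reduce to the ``atomic'' generators via the decomposition of the substochastic comb (Eqs.~\eqref{decompceffect}--\eqref{decompceffectfc}) and both rest on the diagrammatic identity that postselecting the flag wire returns $\mu(x)$ (resp.\ $\nu(y)$) times the atomic state (resp.\ effect), with full support of $\mu$ and $\nu$ guaranteeing these scalars are strictly positive and hence irrelevant to the cone. The only differences are organizational --- you establish the two cone inclusions separately (one direction via the subfragment property, which the paper proves but does not need here) and you are somewhat more explicit about the cone being closed under nonnegative combinations, whereas the paper compresses the same content into a single chain of cone equalities.
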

\proof
This is shown by straightforward calculation:
\begin{align}
\mathsf{Cone}[\Omega_{P^\mu}] &= \mathsf{Cone}\left[\left\{ %
\InputIfFileExists{Diagrams/cone1.tikz}{}{\input{./figures/Diagrams/cone1.tikz}} \right\}_{\! \! c'}\right]  \\
&= \mathsf{Cone}\left[\left\{ %
\InputIfFileExists{Diagrams/cone2.tikz}{}{\input{./figures/Diagrams/cone2.tikz}} \right\}_{\!\!  a,x}\right]  \label{randomeqexpln}\\
&= \mathsf{Cone}\left[\left\{%
\InputIfFileExists{Diagrams/cone3.tikz}{}{\input{./figures/Diagrams/cone3.tikz}} \right\}_{\!\!  a,x}\right]  \\
&= \mathsf{Cone}\left[\left\{%
\InputIfFileExists{Diagrams/cone4.tikz}{}{\input{./figures/Diagrams/cone4.tikz}}%
} \right\}_{\!\!  a,x}\right]  \\
&= \mathsf{Cone}\left[\left\{%
\InputIfFileExists{Diagrams/cone4.tikz}{}{\input{./figures/Diagrams/cone4.tikz}} \right\}_{\!\!  a,x}\right] \\
&=\mathsf{Cone}[\Omega_{P}].
\end{align}
Here, Eq.~\eqref{randomeqexpln} follows by substituting a decomposition of $c'$ as shown  in Eq.~\eqref{decompceffectfc}. 
The proof for the measurement side is exactly analogous.
\endproof

This establishes cone equivalence.

\section{Proof that a flag-convexified accessible GPT fragment is a subfragment of the original accessible GPT fragment} \label{subfrag}

We here prove the following:
\begin{lemma}
The state (effect) space of a flag-convexified accessible GPT fragment is contained within the state (effect) space of the original accessible GPT fragment. That is, 
 $\Omega_{P^\mu} \subseteq \Omega_P$ and $\mathcal{E}_{M^\nu} \subseteq \mathcal{E}_M$.
\end{lemma}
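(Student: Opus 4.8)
The plan is to establish each inclusion by showing that every generator of the flag-convexified fragment is already a generator of the original fragment. First I would invoke the first lemma of Appendix~\ref{provingFCmeansCE}, which gives $S_P=S_{P^\mu}$ and $S_M=S_{M^\nu}$; this is what makes the asserted inclusions meaningful, since the two state spaces (resp.\ effect spaces) then live in a common vector space. I would also recall from that same appendix two facts: the decomposition \eqref{decompceffectfc} of an arbitrary substochastic effect $c'$ into atomic pieces, and the identity established there that the atomic preparation of the flag-convexified source $P^\mu$ associated with outcome $(a,x)$ equals $\mu(x)$ times the atomic preparation $\sigma_{a|x}$ of $P$ for setting $x$ and outcome $a$ (here $\mu$ is the full-support setting distribution). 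Combining these, a generic element of $\Omega_{P^\mu}$ has the form $\sum_{a,x} c'_{ax}\,\mu(x)\,\sigma_{a|x}$, where $c'_{ax}\in[0,1]$ since $c'$ is a substochastic effect.

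The key step is to produce a substochastic comb $c\in\mathsf{SubStochComb}$ whose action on $P$ reproduces this state. Writing the given $c'$ in the form \eqref{eq:crc} with internal memory $Z'$, so that $c'_{ax}=\sum_{z'}q'(z')\,r'(a,x,z')$ in the evident notation, I would take $c$ to have memory $Z:=X\times Z'$, with preparation part $q(x,(x',z')):=\mu(x)\,\delta_{x,x'}\,q'(z')$ and response part $r(a,(x',z')):=r'(a,x',z')$. Since $\mu$ is a probability distribution and $q'$ is subnormalized, $q$ is a subnormalized distribution on $X\times Z$, and $r$ is a valid response function, so $c$ is indeed a substochastic comb; moreover feeding $P$ with $c$ yields $\sum_{x}\mu(x)\sum_{z'}q'(z')\sum_a r'(a,x,z')\,\sigma_{a|x}=\sum_{a,x} c'_{ax}\,\mu(x)\,\sigma_{a|x}$, the target state. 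Hence $\Omega_{P^\mu}\subseteq\Omega_P$. The effect side is word-for-word dual: an arbitrary element of $\mathcal{E}_{M^\nu}$ has the form $\sum_{b,y} d'_{by}\,\nu(y)\,E_{b|y}$ with $d'_{by}\in[0,1]$ and $E_{b|y}$ the atomic effects of $M$, and the same construction (with $\nu$ replacing $\mu$ and the comb now acting on $M$) exhibits it as an element of $\mathcal{E}_M$.

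The only real difficulty is bookkeeping --- keeping track of which classical wires are inputs and which are outputs of the various combs, and checking that the manufactured $c$ meets whatever normalization constraints are built into $\mathsf{SubStochComb}$. Using $c'$'s own decomposition \eqref{eq:crc} (rather than an ad hoc choice of response function) sidesteps the latter worry, because $c$ is then assembled entirely from pieces already certified admissible, tensored with the harmless correlating factor $\mu(x)\,\delta_{x,x'}$; when response functions are allowed to be arbitrary elements of $[0,1]^{A\times Z}$ one can even take $Z=X$ and $r(a,z)=c'_{az}$ directly. Since the span equality, the atomic decompositions and the rescaling identity are all already in place from Appendix~\ref{provingFCmeansCE}, the argument is short; it is in fact the cone-equality computation of that appendix with the operator $\mathsf{Cone}[\,\cdot\,]$ simply deleted.
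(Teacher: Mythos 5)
Your proof is correct and is essentially the paper's own argument: the paper's Appendix on this lemma simply takes a state in $\Omega_{P^\mu}$, written as $P$ composed with the $\mu$-weighted copy structure and $c'$, and identifies that composite as a substochastic comb $c$ for the original multisource, exactly the identification you carry out explicitly via $q(x,(x',z'))=\mu(x)\,\delta_{x,x'}\,q'(z')$ and $r(a,(x',z'))=r'(a,x',z')$. Your detour through the atomic decomposition is harmless extra bookkeeping; the paper does the same step diagrammatically in one line, and both treat the effect side as exactly analogous.
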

\proof
Start by assuming that we have some state $s \in \Omega_{P^\mu}$. The definition of $\Omega_{P^\mu}$ means that
\beq
} = %
\InputIfFileExists{Diagrams/fcnuOmegac2.tikz}{}{\input{./figures/Diagrams/fcnuOmegac2.tikz}} = %
\InputIfFileExists{Diagrams/fcnuOmegad.tikz}{}{\input{./figures/Diagrams/fcnuOmegad.tikz}}.
\eeq
Now, we can make the identification
\beq
\InputIfFileExists{Diagrams/cp_to_c2.tikz}{}{\input{./figures/Diagrams/cp_to_c2.tikz}} =: %
\InputIfFileExists{Diagrams/OmegaPc.tikz}{}{\input{./figures/Diagrams/OmegaPc.tikz}} ,
\eeq
which immediately means that $s \in \Omega_P$.

The proof for the measurement side is exactly analogous.
\endproof

\section{Alternative Proof of Theorem~\ref{maintechnicalres}} \label{strongerthm}

An accessible GPT fragment $G=(\Omega,\mathcal{E},B,u)$ is simplex embeddable if and only if there exists some integer $n$ and linear maps $\iota:S_P \to \mathds{R}^n$ and $\kappa:{S_M}^* \to {\mathds{R}^n}^*$ such that for all $s\in\Omega$ and $e\in \mathcal{E}$, one has
\begin{align}
\iota(s) &\in \Delta_{sub}^n \label{embcond11},\\
\kappa(e) &\in {\Delta_{sub}^n}^* \label{embcond21},\\
\kappa(e)[\iota(s)]  &= {B}(e,s)\label{embcond31},\\
\kappa(u)&=\mathbf{1}_n \label{embcond41}.
\end{align}

Now, we can define an apparently weaker notion which refers only to cones. That is, we say that an accessible GPT fragment $G=(\Omega,\mathcal{E},B,u)$ is {\bf simplicial-cone embeddable} if and only if there exists some integer $n$ and linear maps $\iota:\mathsf{Span}[\Omega] \to \mathds{R}^n$ and $\kappa:\mathsf{Span}[\mathcal{E}] \to {\mathds{R}^n}^*$ such that for all $s\in\mathsf{Cone}[\Omega]$ and $e\in \mathsf{Cone}[\mathcal{E}]$, one has
\begin{align}
\iota(s) &\in \mathsf{Cone}[\Delta_{sub}^n] \label{embcond12},\\
\kappa(e) &\in \mathsf{Cone}[{\Delta_{sub}^n}]^* \label{embcond22},\\
\kappa(e)[\iota(s)]  &= {B}(e,s)\label{embcond32}.
\end{align}

As this condition only depends on the triple $(\mathsf{Cone}[\Omega],\mathsf{Cone}[\mathcal{E}],B)$, it is immediately clear that given two cone equivalent accessible GPT fragments, either both will be simplicial-cone embeddable, or neither.

We will now prove the following theorem:
\begin{theorem}
An accessible GPT fragment $G=(\Omega, \mathcal{E}, B, u)$ is simplex embeddable if and only if it is simplicial-cone embeddable.
\end{theorem}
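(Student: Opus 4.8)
The plan is to prove the two implications separately, with the forward direction immediate and the reverse direction requiring a diagonal rescaling of the embedding maps. A preliminary observation that streamlines everything: since the states in $\Omega$ span $S_P$ and the effects in $\mathcal{E}$ span ${S_M}^*$ by construction, one has $\mathsf{Span}[\Omega]=S_P$ and $\mathsf{Span}[\mathcal{E}]={S_M}^*$, so the two notions concern linear maps with exactly the same domains and codomains; only the imposed constraints differ. A second elementary but crucial point I would record up front is that both $\mathsf{Cone}[\Delta^n_{\mathrm{sub}}]$ and $\mathsf{Cone}[{\Delta^n_{\mathrm{sub}}}^*]$ equal the nonnegative orthant of $\mathds{R}^n$ (respectively its dual), because $\Delta^n_{\mathrm{sub}}$ is the convex hull of $0$ and the standard basis, so $\Delta^n_{\mathrm{sub}}=\{v\ge 0:\sum_i v_i\le 1\}$ and ${\Delta^n_{\mathrm{sub}}}^*=[0,1]^n$.

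For the ``only if'' direction, I would take the maps $\iota,\kappa$ witnessing simplex embeddability and check they witness simplicial-cone embeddability: every element of $\mathsf{Cone}[\Omega]$ is $\lambda s$ with $s\in\Omega$ and $\lambda\ge 0$, so $\iota(\lambda s)=\lambda\iota(s)\in\lambda\,\Delta^n_{\mathrm{sub}}\subseteq\mathsf{Cone}[\Delta^n_{\mathrm{sub}}]$, giving Eq.~\eqref{embcond12}; Eq.~\eqref{embcond22} follows identically, and Eq.~\eqref{embcond32} follows from Eq.~\eqref{embcond31} by bilinearity of $B$ and of the pairing.

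For the ``if'' direction, I would start from maps $\iota,\kappa$ witnessing simplicial-cone embeddability and set $w:=\kappa(u)$, where $\kappa(s)$ and $\kappa(e)$ already have nonnegative components on the relevant cones by the orthant observation above. The first step is to reduce to the case $w_i>0$ for all $i$: if $w_i=0$, then for any $e\in\mathcal{E}$ the complementation property $e+e^\perp=u$ of Eq.~\eqref{eq:MeasCompletion} gives $\kappa(e)_i+\kappa(e^\perp)_i=0$ with both terms nonnegative, so $\kappa(e)_i=0$, hence (by linearity) $\kappa(\cdot)_i\equiv 0$ on ${S_M}^*$; coordinate $i$ then contributes nothing to any pairing $\kappa(e)[\iota(s)]$, so deleting it (and lowering $n$) preserves Eqs.~\eqref{embcond12}--\eqref{embcond32}. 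With $w_i>0$ for all $i$, I would set $D:=\mathrm{diag}(w_1,\dots,w_n)$ and define $\iota':=D\circ\iota$ and $\kappa':=D^{-1}\circ\kappa$ (under the standard identification of $(\mathds{R}^n)^*$ with $\mathds{R}^n$). Because $D$ has strictly positive diagonal, $\iota'$ and $\kappa'$ still map the cones of $\Omega$ and $\mathcal{E}$ into the nonnegative orthant; moreover $\kappa'(e)[\iota'(s)]=\kappa(e)[\iota(s)]=B(e,s)$ and $\kappa'(u)=D^{-1}w=\mathbf{1}_n$, so Eqs.~\eqref{embcond31} and \eqref{embcond41} hold.

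Finally I would verify the two remaining normalization conditions. For $s\in\Omega$: $\iota'(s)\ge 0$ componentwise and $\mathbf{1}_n[\iota'(s)]=\kappa'(u)[\iota'(s)]=B(u,s)\le 1$ since $u$ is the unit effect and $s$ is a state, so $\iota'(s)\in\Delta^n_{\mathrm{sub}}$. For $e\in\mathcal{E}$: $\kappa'(e)\ge 0$ componentwise, and applying $\kappa'$ to $e+e^\perp=u$ yields $\kappa'(e)+\kappa'(e^\perp)=\mathbf{1}_n$ with both terms nonnegative, hence $\kappa'(e)\le\mathbf{1}_n$ componentwise, i.e.\ $\kappa'(e)\in[0,1]^n={\Delta^n_{\mathrm{sub}}}^*$. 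Thus $\iota',\kappa'$ witness simplex embeddability, completing the argument. I expect the only mildly delicate steps to be the reduction to strictly positive $w$ (the one place Eq.~\eqref{eq:MeasCompletion} and Appendix~\ref{proofcomplem} are needed) and keeping the action of $D$ on the dual space straight; the rest is routine bookkeeping.
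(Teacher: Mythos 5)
Your proof is correct and follows essentially the same route as the paper's: the paper first restricts to the face of the simplicial cone on which $\kappa(u)$ has full support (your deletion of the zero coordinates of $\kappa(u)$, justified exactly as in the paper via $e+e^\perp=u$ and nonnegativity) and then invokes homogeneity of the simplicial cone to obtain an automorphism $\tau$ with $\tau^*(\kappa(u))=\mathbf{1}_n$, of which your diagonal rescaling $D=\mathrm{diag}(\kappa(u))$ is just an explicit instance under the orthant identification. The concluding normalization checks, via $\mathbf{1}_n[\iota'(s)]=B(u,s)\le 1$ and $\kappa'(e)+\kappa'(e^\perp)=\mathbf{1}_n$, coincide with the paper's, so there are no gaps.
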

\proof
The only if direction is trivial, as the embedding maps for simplex embedding also satisfy all of the required conditions for simplicial-cone embedding. We will therefore focus on the if direction; that is, the existence of an $n$, $\iota$, and $\kappa$ defining a simplicial-cone embedding implies the existence of an $n'$, $\iota'$, and $\kappa'$ defining a simplex embedding.

 The key step in the proof is to show that although it need not be the case that $\kappa(u)=\mathbf{1}_n$, the $\kappa'$ that we construct does satisfy $\kappa'(u)=\mathbf{1}_n$. 

Note that without loss of generality we can take $\kappa(u)$ to be in the interior of the simplicial cone $\mathsf{Cone}[{\Delta_{sub}^n}]^*$. The reason that this can be done without loss of generality is that if $\kappa(u)$ is not in the interior, then it must lie on a simplicial subcone constituting a face. Then, one can always project down onto the simplicial-subcone for which it is in the interior, and then have a new simplicial-cone embedding into this subcone. From the perspective of ontological models, what this means is that if $\kappa(u)$ does not have full support on the ontic state space $\Lambda$, then we can always restrict attention to the subset of $\Lambda$ for which it does have support, as all of the other response functions in the model necessarily only have support on this subset, and so we can project the epistemic states of the model into this space as well. More formally, note that every other effect $\kappa(e)$ is already in this subcone, as we have that there exists $\kappa(e^\perp)$ in the cone such that $\kappa(e)+\kappa(e^\perp) = \kappa(e+e^\perp) = \kappa(u)$, and hence, when we project $\iota(s)$ into this subcone none of the probabilities of obtaining these effects will change.

We therefore consider the case in which $\kappa(u)$ is in the interior of  $\mathsf{Cone}[{\Delta_{sub}^n}]^*$. As $\mathsf{Cone}[{\Delta_{sub}^n}]^*$ is a homogeneous cone\footnote{A homogeneous cone is one for which the automorphism group is transitive on the interior, i.e., for which given any pair of interior points there is an automorphism mapping between them.}, this means that there exists a cone automorphism\footnote{That is, an invertible linear map which preserves the cone and whose inverse preserves the cone.} $\tau$ such that $\tau^*(\kappa(u)) = \mathbf{1}_n$. We therefore define $\kappa':= \tau^* \circ \kappa$ and $\iota':= {\tau^{-1}} \circ \iota$. These satisfy: 
\begin{align}
\iota'(s) &\in \mathsf{Cone}[\Delta_{sub}^n],\\
\kappa'(e) &\in \mathsf{Cone}[{\Delta_{sub}^n}]^*,\\
\kappa'(e)[\iota'(s)]  &=  \tau^*(\kappa(e))[{\tau^{-1}}(\iota(s))]\\ &=  \kappa(e)[\tau(\tau^{-1}(\iota(s)))]\\ &= {B}(e,s),\\
\kappa'(u)&= \tau^*(\kappa(u))\\ &=\mathbf{1}_n.
\end{align}

 To complete the proof, we need only show that $\iota'(s) \in \Delta_{sub}^n$ and $\kappa'(e) \in {\Delta_{sub}^n}^*$. 

Because we know that $\iota'(s) \in \mathsf{Cone}[\Delta_{sub}^n]$ and we know that
\beq
\mathbf{1}_n[\iota'(s)] = \kappa'(u)[\iota'(s)] = B(u,s) \leq 1,
\eeq
for any $s\in \Omega$, which together mean that $\iota'(s) \in \Delta_{sub}^n$ as we require of a simplex embedding. 

Similarly, we know that for any effect $e$ there exists an effect $e^\perp$ such that $u=e+e^\perp$, and that $\kappa'(e),\kappa'(e^\perp) \in  \mathsf{Cone}[{\Delta_{sub}^n}]^*$, Moreover,
\beq
\kappa'(e)+\kappa'(e^\perp) = \kappa'(u) = \mathbf{1}_n,
\eeq
which means that $\kappa'(e)\leq \mathbf{1}_n$, and, hence, that $\kappa'(e) \in {\Delta_{sub}^n}^*$ as we require of a simplex embedding.

\endproof

What this shows is that whether or not an accessible GPT fragment is simplex embeddable is only actually dependent on the tuple $(\mathsf{Cone}[\Omega],\mathsf{Cone}[\mathcal{E}], B)$, as this is the data necessary to decide if it is simplicial-cone embeddable. 
 Hence, for any two accessible GPT fragments that are cone-equivalent, either (i) they are both simplicial-cone embeddable and both simplex-embeddable, or (ii) neither is simplicial-cone embeddable and neither is simplex-embeddable.

\end{document}